\newcommand{\RR}{\mathbb{R}}
\newcommand{\CC}{\mathbb{C}}
\newcommand{\NN}{\mathbb{N}}
\newcommand{\cH}{\mathcal{H}}
\newcommand{\cG}{\mathcal{G}}
\newcommand{\cB}{\mathcal{B}}
\newcommand{\cL}{\mathcal{L}}
\newcommand{\cK}{\mathcal{K}}
\newcommand{\cV}{\mathcal{V}}
\newcommand{\cE}{\mathcal{E}}
\newcommand{\cU}{\mathcal{U}}
\newtheorem{theorem}{Theorem}
\newtheorem*{theorem*}{Theorem}
\newtheorem{prop}[theorem]{Proposition}
\newtheorem{lemma}[theorem]{Lemma}
\newtheorem{corol}[theorem]{Corollary}
\theoremstyle{remark}
\newtheorem{example}[theorem]{Example}
\newcommand{\slim}{\mathop{\text{s-lim}}\limits}
\DeclareMathOperator{\spec}{spec}
\DeclareMathOperator{\dom}{dom}
\DeclareMathOperator{\ran}{ran}
\DeclareMathOperator{\Id}{Id}
\DeclareMathOperator{\indeg}{indeg}
\DeclareMathOperator{\outdeg}{outdeg}
\begin{document}

\title[Dimension reduction for self-adjoint extensions]{Unitary dimension reduction for a class of self-adjoint extensions with applications to graph-like structures}

\author{Konstantin Pankrashkin}

\address{Laboratoire de math\'ematiques d'Orsay (UMR 8628)\\
Universit\'e Paris-Sud 11, B\^atiment 425\\
91405 Orsay Cedex, France}

\email{konstantin.pankrashkin@math.u-psud.fr}
\urladdr{http://www.math.u-psud.fr/~pankrash/}

\begin{abstract}
We consider a class of self-adjoint extensions using the boundary triple technique.
Assuming that the associated Weyl function has the special form
$M(z)=\big(m(z)\Id-T\big) n(z)^{-1}$ with a bounded self-adjoint operator
$T$ and scalar functions $m,n$ we show that there exists
a class of boundary conditions such that the spectral problem for the associated self-adjoint
extensions in gaps of a certain reference operator admits a unitary reduction
to the spectral problem for $T$. As a motivating example we consider
differential operators on equilateral metric graphs, and we describe a class of boundary
conditions that admit a unitary reduction to generalized discrete laplacians.
\end{abstract}

\subjclass[2000]{Primary 47B25; Secondary 47A56, 34L40}

\keywords{self-adjoint extension, Weyl function, boundary triple, quantum graph, metric graph}

\maketitle

\section{Introduction}

The present work is motivated by the study of the relationship between discrete operators
on graphs and differential operators on metric graphs (quantum graphs),
see \cite{GS,QG1,EDAG, AGA,Ku}. Let us recall the basic notions and introduce an illustrative example.

Let $G$ be a countable graph, the sets of the vertices and of the edges of $G$
will be denoted by $\cV$ and $\cE$, respectively, and multiple edges and self-loops
are allowed. For an edge $e\in \cE$ we denote by $\iota{e}\in \cV$ its initial
vertex and by $\tau{e}\in\cV$ its terminal vertex.
For a vertex $v$, the number of outgoing edges and the number of ingoing edges
will be denoted by $\outdeg v$ and $\indeg v$, respectively, and
the degree of $v$ is $\deg v:=\indeg v+\outdeg v$.
In what follows we assume that the degrees of the vertices
are uniformly bounded and that there are no isolated vertices, i.e.
$1\le\deg v\le N$ for all $v\in \cV$. Introduce the discrete Hilbert space
\[
l^2(G):=\big\{ f:\cV\to\CC:\, \|f\|^2=\sum_{v\in\cV} \deg v |f(v)|^2<+\infty\big\}
\]
and the weighted adjacency operator $\Delta$ in $l^2(G)$,
\begin{equation}
       \label{eq-delta}
(\Delta f)(v)=\dfrac{1}{\deg v}\Big(\sum_{e:\iota v=e} f(\tau v) + \sum_{e:\tau e=v} f(\iota e)\,\Big).
\end{equation}
Numerous works treat the relationship between the properties of $\Delta$ and $G$, see e.g.
\cite{chung} and references therein.

Let us now introduce a continuous Laplacian on $G$. Consider the Hilbert space
$\cH:=\bigoplus_{e\in\cE}\cH_e$, $\cH_e=L^2(0,1)$,
and the operator $\Lambda$, $\Lambda(f_e)=(-f''_e)$, acting on the functions
$f=(f_e)\in H^2(0,1)$ satisfying the so-called standard boundary conditions:
\begin{gather*}
f_e(1)=f_b(0) \text{ for all } b,e\in \cE \text{ with } \iota{b}=\tau{e}
\text{ (=continuity at each vertex)},\\
\sum_{e:\iota{e}=v} f'_e(0)-\sum_{e:\tau{e}=v} f'_e(1)=0.
\end{gather*}
It is known that  $\Lambda$ is self-adjoint and that its spectrum is closely related
with the spectrum of $\Delta$: denoting $\sigma_D=\{(\pi n)^2:\, n\in\NN\}$ one has
the relationship
\begin{equation}
        \label{eq-dual}
\spec_j \Lambda\setminus \sigma_D= \{z\notin\sigma_D: \cos\sqrt{z}\in\spec_j \Delta\},
\quad
j\in\{\text{p},\text{pp},\text{disc},\text{ess},\text{ac},\text{sc}\}.
\end{equation}
For $j\in\{\text{p},\text{disc},\text{ess}\}$ this was proved,
for example, in \cite{vB85} for finite graphs and in \cite{C97} for infinite graphs.
In \cite{BGP08} the result was obtained for the first time
for all types of spectra, and the work~\cite{P08} used the results of \cite{BGP08}
to prove a similar result for continuous Laplacians with more general boundary conditions.
We refer e.g. to \cite{vblub,RC,dn2,sn1,MH,KP08,KP09,BGP07,KuPo,KP06,KP06b,P09,exdual}
for generalizations to more general differential operators and for the analysis
of particular configurations. 
The aim of the present paper is to improve the relation \eqref{eq-dual}.
If $\Omega$ is a Borel set in $\RR$ and $A$ is a selfadjoint operator,
denote by $A_\Omega$ the part of $A$ in $\Omega$, i.e. 
$A_\Omega=A1_\Omega(A)$ considered as an operator in $\ran 1_\Omega(A)$;
here $1_\Omega(A)$ is the spectral projector of $A$ onto $\Omega$.
A simple corollary of Theorem \ref{th2} below is the following
\begin{prop}\label{th0}
Denote $\eta(z):=\cos\sqrt{z}$, then for any interval $J\subset\RR\setminus\sigma_D$
the operator $\Lambda_J$ is unitarily equivalent to the operator $\eta^{-1}\big(\Delta_{\eta(J)}\big)$.
\end{prop}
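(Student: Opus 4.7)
The plan is to apply the still-to-come Theorem~\ref{th2} to a suitably chosen boundary triple for the metric-graph Laplacian, identifying the abstract operator $T$ with the discrete operator $\Delta$ and the scalar functions $m,n$ with expressions involving $\cos\sqrt{z}$ and $\sin\sqrt{z}/\sqrt{z}$. The reference operator (corresponding to $\Gamma_0=0$) will be the Dirichlet-decoupled Laplacian on the disjoint union of edges, whose spectrum is exactly $\sigma_D$, so intervals $J\subset\RR\setminus\sigma_D$ will sit in its resolvent set as required by the abstract theorem.

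First, I would introduce the minimal operator $S=\bigoplus_{e\in\cE}(-d^2/dx^2)$ acting on $\bigoplus_e H^2_0(0,1)\subset\cH$, and then an intermediate symmetric extension $S_0$ obtained by imposing continuity at each vertex. The adjoint $S_0^*$ consists of edgewise-$H^2$ functions that are continuous at every vertex. A natural boundary triple for $S_0^*$ has boundary space $\cG=l^2(G)$, with $\Gamma_0 f=(f(v))_{v\in\cV}$ and $\Gamma_1 f$ encoding the sum of outgoing derivatives at each vertex, normalized by $\deg v$ so as to land in $l^2(G)$ with the weighted norm introduced in the paper; by construction $\Lambda$ is the extension $\ker \Gamma_1$, while $\ker\Gamma_0$ is the Dirichlet decoupling with spectrum $\sigma_D$. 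The crucial computation is that of the Weyl function: plugging in the fundamental solutions $\cos(\sqrt{z}x)$ and $\sin(\sqrt{z}x)/\sqrt{z}$ of $-u''=zu$ on each edge, rewriting the boundary derivatives in terms of the vertex values, and collecting the coefficients at each vertex gives
\[
M(z)=\frac{\sqrt{z}}{\sin\sqrt{z}}\bigl(\cos\sqrt{z}\,\Id-\Delta\bigr),
\]
which is exactly of the form $M(z)=(m(z)\Id-T)n(z)^{-1}$ with $m(z)=\cos\sqrt{z}=\eta(z)$, $n(z)=\sin\sqrt{z}/\sqrt{z}$, and $T=\Delta$.

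With this identification in hand, Proposition~\ref{th0} will follow immediately from Theorem~\ref{th2}: the abstract unitary reduction in gaps of the reference operator translates into $\Lambda_J\simeq\eta^{-1}(\Delta_{\eta(J)})$ on any interval $J\subset\RR\setminus\sigma_D$. The main obstacle I expect is the bookkeeping needed to set up the intermediate boundary triple so that $T$ comes out exactly equal to the weighted adjacency operator $\Delta$ defined in \eqref{eq-delta}, rather than some conjugate or rescaled version; getting the weighting consistent with the $\deg v$ inner product on $l^2(G)$ is where the algebraic manipulation is delicate, while the spectral/analytic content is entirely absorbed by Theorem~\ref{th2}.
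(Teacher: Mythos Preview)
Your approach is correct and essentially the same as the paper's: Proposition~\ref{th0} is obtained there as the special case $l=1$, $V\equiv 0$, $\alpha=0$ of Theorem~\ref{th2}, whose proof sets up a boundary triple with the Dirichlet-decoupled operator as reference, identifies the projected Weyl function in the form required by Theorem~\ref{th-main}, and checks that the discrete operator appearing is unitarily equivalent to $\Delta$---exactly the program you outline. The only cosmetic differences are that the paper first builds the boundary triple on the larger space $\bigoplus_{v\in\cV}\CC^{\deg v}$ and then projects onto $l^2(G)$ (rather than imposing vertex continuity from the outset as you do), and that your displayed $M(z)$ carries the opposite overall sign compared to the paper's convention, which yields $M(z)=\dfrac{\sqrt z}{\sin\sqrt z}\big(\Delta-\cos\sqrt z\,\Id\big)$; neither point affects the argument.
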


It was noted by the author in \cite{KP06} that the operator $\Lambda$ can be studied
at an abstract level using the language
of boundary triples and self-adjoint extensions \cite{GG,DM91,BGP08}. 
Let $S$ be a closed densely defined symmetric operator
in a separable Hilbert space $\cH$ with the domain $\dom S$. Assume that $S$ has equal deficiency indices,
i.e. that $\dim\ker(S^*+i)=\dim\ker(S^*-i)$. A \emph{boundary triple} for $S$ consists of a Hilbert space $\cG$
and two linear maps $\Gamma,\Gamma':\dom S\to\cG$ satisfying the following two conditions:
\begin{itemize}
\item $\langle f,S^*g\rangle-\langle S^*f,g\rangle =\langle \Gamma f,\Gamma'g\rangle-\langle \Gamma'f,\Gamma g\rangle$
for all $f,g\in\dom S^*$,
\item the application $(\Gamma,\Gamma'):\dom S^*\ni f \mapsto (\Gamma f,\Gamma' f)\in \cG\oplus\cG$
is surjective.
\end{itemize}
We will consider the two distinguished self-adjoint extensions of $S$:
\begin{equation}
     \label{eq-hh0}
H^0:=S^*|_{\ker \Gamma} \text{ and } H:=S^*|_{\ker\Gamma'}.
\end{equation}
It is known \cite{DM91} that for any two self-adjoint extensions $H^0$ and $H$
satisfying $\dom H\cap\dom H^0=\dom S$ ($H$ and $H^0$ are then called disjoint)
one can find a boundary triple $(\cG,\Gamma,\Gamma')$ such that \eqref{eq-hh0} holds.
An essential role in the analysis of the self-adjoint
extensions is played by the so-called Weyl function $M(z)$ which is defined as follows.
For $z\notin\spec H^0$ consider the operator $\gamma(z):=\big(\Gamma|_{\ker(S^*-z)}\big)^{-1}$
which is a linear topological isomorphism between $\cG$ and $\ker(S^*-z)\subset\cH$,
then the map $\CC\setminus\spec H^0\ni z\mapsto \gamma(z)\in \cL(\cG,\cH)$ (called $\gamma$-field) is holomorph.
The operator function $\CC\setminus\spec H^0\ni z\mapsto M(z):=\Gamma'\gamma(z)\in \cL(\cG)$
is called the \emph{Weyl} function associate with the boundary triple.
Outside $\spec H^0\cup\spec H$ the Krein resolvent formula holds,
$(H^0-z)^{-1}-(H-z)^{-1}=\gamma(z)M(z)^{-1}\gamma(z)^*$,
and we have the relation \cite{DM91,BGP08}
\begin{equation}
       \label{eq-mspec}
\spec_j H\setminus\ \spec H^0=\big\{
z\notin\spec H^0:\, 0\in\spec_j M(z)
\big\}, \quad
j\in\{\text{p},\text{disc},\text{ess}\}.
\end{equation}
Numerous papers were devoted to the question whether
one can detalize the relation \eqref{eq-mspec} and to recover, for example,
the singular or the absolutely continuous spectrum of $H$ in terms
on the spectral properties of $M$, see e.g. \cite{ABMN05,jfb,BMN02,BGP08, DM91,DM} and references there-in.
Our main result contributes this direction
and concerns Weyl functions of a special form.
\begin{theorem}\label{th-main}
Assume that the Weyl function $M$ has the form
\begin{equation}
        \label{eq-special}
M(z)=\dfrac{m(z)\Id-T}{n(z)}
\end{equation}
where
\begin{itemize}
\item $T$ is a bounded self-adjoint operator in $\cG$,
\item $m$ and $n$ are scalar functions which
are holomorph outside $\spec H^0$.
\end{itemize}

Assume that there exists a spectral gap $J:=(a_0,b_0)\subset \RR\setminus \spec H^0$
such that $m$ and $n$ admit a holomorph continuation to $J$, are both real-valued in $J$,
that $n\ne 0$ in $J$, and that $m(J)\cap \spec T\ne\emptyset$, then 
\begin{itemize}
  \item[(a)] there exists an interval $K$ containing $m^{-1}(\spec T)\cap J$
  such that $m:K\to m(K)$ is a bijection; denote by $\mu$ the inverse function;
  \item[(b)] the operator $H_J$ is unitarily equivalent to $\mu(T_{m(J)})$. 
\end{itemize}
\end{theorem}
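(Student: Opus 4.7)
The plan is to treat (a) and (b) separately. Throughout I use the standard identity $M'(z)=\gamma(z)^*\gamma(z)$ on $J$ and the observation that $M'(z)\ge c(z)^2\Id$ for some $c(z)>0$, since $\gamma(z)$ is a topological isomorphism and hence bounded below on $\cG$. Because $n$ is continuous and nonvanishing on $J$ it has constant sign, and the substitution $(m,n,T)\mapsto(-m,-n,-T)$ leaves $M$ unchanged, so I may assume $n>0$ on $J$.

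For (a) I would proceed in three steps. First, $m'(z_0)>0$ at every $z_0\in S:=m^{-1}(\spec T)\cap J$: differentiating the special form of $M$ yields
\[
M'(z)=\frac{m'(z)}{n(z)}\Id-\frac{n'(z)}{n(z)^2}\bigl(m(z)\Id-T\bigr),
\]
and testing against a sequence $\phi_k$ of unit approximate eigenvectors of $T$ at $t_0:=m(z_0)\in\spec T$, the quantity $\langle M'(z_0)\phi_k,\phi_k\rangle$ converges to $m'(z_0)/n(z_0)$ while staying bounded below by $c(z_0)^2>0$. Second, at any critical point $z_c\in J$ of $m$ the value $m(z_c)$ must lie strictly outside $\operatorname{conv}(\spec T)$: setting $m'(z_c)=0$ in the displayed formula gives $M'(z_c)=n'(z_c)n(z_c)^{-2}(T-m(z_c)\Id)$, whose strict positivity forces $n'(z_c)\ne 0$ and $\spec T$ to lie strictly on one side of $m(z_c)$. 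Third, an intermediate-value argument rules out critical points of $m$ between two points of $S$: if $z_1<z_c<z_2$ with $z_1,z_2\in S$ and, say, $m(z_c)<\inf\spec T$, then on $[z_1,z_c]$ the function $m$ must cross $\alpha:=\inf\spec T\in\spec T$ at some last point $z^*$, where $m$ is non-increasing---contradicting the first step since $z^*\in S$. Consequently $m$ is strictly monotone on any interval $K\subset J$ containing $S$, for instance the convex hull of $S$, which yields the bijection and its inverse $\mu$.

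For (b) I would start from the Krein formula and $M(z)^{-1}=n(z)(m(z)\Id-T)^{-1}$, giving
\[
(H-z)^{-1}-(H^0-z)^{-1}=-n(z)\,\gamma(z)\,\bigl(m(z)\Id-T\bigr)^{-1}\gamma(z)^*.
\]
For $t\in K$ and $z=t\pm i\epsilon$ the term $(H^0-z)^{-1}$ is regular; since $m'(t)>0$, $m$ sends the upper half-plane near $t$ into the upper half-plane near $m(t)$, so the only boundary singularity of the right-hand side comes from the $T$-resolvent at $m(t)\pm i0$. Stone's formula for $H_J$ on the left, combined with the change of variables $s=m(t)$ and Stone's formula for $T_{m(J)}$ on the right, identifies the spectral measure of $H_J$ with the pull-back by $m$ of the spectral measure of $T_{m(J)}$. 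A concrete unitary $U\colon\ran 1_{m(J)}(T)\to\ran 1_J(H)$ realizing $UH_JU^{-1}=\mu(T_{m(J)})$ can then be assembled from the boundary values of $\gamma$ with the weight $(n(t)/m'(t))^{1/2}$.

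The hardest part will be the last step of (b): upgrading the spectral-measure identity into an explicit unitary intertwining. Matching of the point spectra and their multiplicities is immediate from the relation $\ker(H-\lambda)=\gamma(\lambda)\ker(T-m(\lambda))$ for $\lambda\in S$, but for the continuous spectral components a careful handling of the boundary values of $\gamma(z)$ on $J$, and a verification that the weighted map $(n/m')^{1/2}\gamma$ is an isometric bijection between the two spectral subspaces, are required.
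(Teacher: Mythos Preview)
Your treatment of (a) is correct and more direct than the paper's. The paper obtains $m'\ne 0$ on $K := m^{-1}([\inf\spec T,\sup\spec T])\cap J$ by citation, and then deduces connectedness of $K$ via a sign analysis of the Stieltjes kernel of $N=-M^{-1}$ combined with the nonvanishing of $n$ (its Lemmas 10--12). Your route through $M'(z)=\gamma(z)^*\gamma(z)\ge c(z)^2\Id$ handles both facts at once without touching the Herglotz-measure machinery; this is a genuine simplification. One small point: in your third step make explicit that $\inf\spec T$ and $\sup\spec T$ belong to $\spec T$ (as $T$ is bounded self-adjoint), so your crossing point $z^*$ really lies in $S$.

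In (b) there is a genuine gap at exactly the step you flag. Stone's formula through the Krein identity produces a relation between two operator-valued measures living in \emph{different} Hilbert spaces: $E_{H,J}$ acts on $\ran 1_J(H)\subset\cH$ while $E_{T_{m(J)}}$ acts on $\ran 1_{m(J)}(T)\subset\cG$. What you obtain is a \emph{dilation} relation, not an identification, and a dilation alone does not yield a unitary. The paper closes this by invoking Naimark's dilation theorem: it shows that the bounded Herglotz measure $\Sigma^0_{N,J}$ associated with $N:=-M^{-1}$ is a \emph{minimal} dilation of the spectral measure of $R=\mu(T_{m(J)})$, via the dilating operator $D=[n(R)m'(R)^{-1}(1+R^2)^{-1}]^{1/2}P$; it then uses the known fact (its Proposition~5) that for simple $S$ the spectral measure of $H$ is another minimal orthogonal dilation of $\Sigma^0_N$. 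Naimark's uniqueness furnishes the unitary equivalence without ever writing the unitary down. Your weight $(n/m')^{1/2}$ is precisely the paper's $D$ up to the harmless $(1+R^2)^{-1/2}$ normalization, so your intuition points in the right direction, but the missing ingredients are minimality and Naimark. You also omit the reduction from general $S$ to simple $S$, which the paper handles by splitting off the self-adjoint part of $S$ and noting it carries no spectrum in $J$.
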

As was shown in \cite{KP06}, the analysis of the above operator $\Lambda$ can be put
into the framework of boundary triples: the associated Weyl function in suitable coordinates
has the requested form $M(z)=\big(\Delta-\cos\sqrt{z}\,\Id\big) \sqrt z / \sin\sqrt z$,
and Proposition \ref{th0} becomes a simple corollary of Theorem \ref{th-main}.
We recall these constructions and generalize the above example in Section \ref{sec-gr}.

Theorem \ref{th-main} shows that the spectral analysis of $H$ in the interval $J$
is equivalent to the spectral analysis of the operator $T$ on a ``smaller'' space $\cG$,
and this fact can be considered as a dimension reduction.
Note that for $n=\text{const}\ne 0$ Theorem \ref{th-main}
is actually proved in~\cite{ABMN05}: it is not stated explicitly, but the proof of Theorem 4.4 in \cite{ABMN05}
contains the result, and we are adapting their scheme of proof to the case of non-constant $n$.
The main difference comes from the fact that for constant $n$ the function
$m$ is strictly increasing, while this is no more true for general $n$, which
brings some additional difficulties.
Note that the results of \cite{ABMN05} are suitable for the analysis of operators
that can be represented as direct sums of operators with deficiency indices $(1,1)$,
but this does not cover the above example with the continuous graph laplacian.

We emphasize that the condition $m(J)\cap\spec T\ne\emptyset$ in Theorem \ref{th-main}
is just to avoid some pathologies in the notation and this does not bring any restriction.
If $m(J)\cap\spec T=\emptyset$, then by \eqref{eq-mspec} the operator $H$ has no spectrum in $J$,
and the assertion (b) still holds formally, as the both operators are defined on the zero space.

Note that as an obvious corollary of Theorem \ref{th-main} we have
the following assertion obtained already in the author's joint work \cite[Theorem 3.16]{BGP08} by a different method:
\begin{corol}
For any $x\in J$ and any $j\in\{\text{p},\text{pp},\text{disc},\text{ess},\text{ac},\text{sc}\}$
the assertions
\begin{itemize}
\item $x\in\spec_j H$,
\item $m(x)\in\spec_j T$
\end{itemize}
are equivalent.
\end{corol}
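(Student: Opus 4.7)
My plan is to derive the corollary directly from Theorem~\ref{th-main}(b), combined with the preservation of each spectral type under (i) unitary equivalence and (ii) the spectral mapping theorem for a real-analytic monotone bijection.

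More precisely, for $x\in J$ and any $j\in\{\text{p},\text{pp},\text{disc},\text{ess},\text{ac},\text{sc}\}$, I would establish the chain of equivalences
\[
x\in\spec_j H \;\Longleftrightarrow\; x\in\spec_j H_J \;\Longleftrightarrow\; x\in\spec_j\mu(T_{m(J)}) \;\Longleftrightarrow\; m(x)\in\spec_j T_{m(J)}\;\Longleftrightarrow\; m(x)\in\spec_j T.
\]
The first and last equivalences are instances of the standard \emph{localization} of the spectrum: for any self-adjoint operator $A$ and any open interval $I\ni y$ one has $y\in\spec_j A\Leftrightarrow y\in\spec_j A_I$; I would apply this with $(A,I)=(H,J)$ and with $(A,I)=(T,m(J))$, the latter being an interval by continuity of $m$. (In the borderline case when $m(x)$ lies on $\partial m(J)$, the statement is vacuously true, because then $x$ lies outside the interval $K$ of Theorem~\ref{th-main}(a), so $m(x)\notin\spec T$ and, by Theorem~\ref{th-main}(b), $x\notin\spec H$.) The second equivalence combines Theorem~\ref{th-main}(b) with the unitary invariance of $\spec_j$. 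The third is the spectral mapping theorem applied to the real-analytic bijection $\mu=m^{-1}\colon m(K)\to K$; this is legitimate because, as observed from part~(a), $\spec T\cap m(J)\subset m(K)$, so $\mu$ is defined on the whole (essential) spectrum of $T_{m(J)}$.

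The only nontrivial point is spectral mapping for the absolutely and singularly continuous parts, $j\in\{\text{ac},\text{sc}\}$. This will follow from the fact that $\mu$ is a $C^\omega$-diffeomorphism between intervals, so its pushforward preserves the Lebesgue decomposition of the spectral measure of $T_{m(J)}$, and the supports of the $\text{ac}$ and $\text{sc}$ parts are transported accordingly onto the corresponding parts of $\mu(T_{m(J)})$. Once this is noted, the chain of equivalences closes and the corollary follows.
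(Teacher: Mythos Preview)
Your proposal is correct and is precisely the natural elaboration of what the paper leaves implicit: the paper states the corollary as an ``obvious corollary of Theorem~\ref{th-main}'' without giving any proof, and your chain of equivalences (localization to $J$, unitary equivalence from Theorem~\ref{th-main}(b), spectral mapping under the analytic diffeomorphism $\mu$, localization to $m(J)$) is exactly how one makes that obviousness explicit. Your handling of the boundary case via Lemma~\ref{prop-pm} and of the ac/sc parts via the diffeomorphism property of $\mu$ is appropriate.
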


\section{Proof of the unitary equivalence}

This section is devoted to the proof of Theorem \ref{th-main}.

\subsection{Operator-valued mesures}

In what follows by $\cB(\RR)$ we denote the algebra of Borel subsets of $\RR$, and by
$\cB_b(\RR)$ its subalgebra consisting of the bounded Borel subsets.
If $\cH$ and $\cH'$ are Hilbert spaces, then $\cL(\cH,\cH')$ stands for the space of bounded linear operators
from $\cH$ to $\cH'$, and $\cL(\cH):=\cL(\cH,\cH)$.
A mapping $\Sigma:\cB_b(\RR)\to \cL(\cH)$ is called an \emph{operator-valued measure} (in $\cH$)
if it is $\sigma$-additive
with respect to the strong convergence and if $\Sigma(B)=\Sigma(B)^*\ge 0$ for all $B\in \cB_b(\RR)$.
An operator-valued measure $\Sigma$ is called \emph{bounded} if extends by $\sigma$-additivity to a map
$\cB(\RR)\to \cL(\cH)$.
A bounded operator-valued measure $\Sigma$ is called \emph{orthogonal} if it satisfies two additional conditions:
$\Sigma(B_1\cap B_2)=\Sigma(B_1)\Sigma(B_2)$ for all $B_1,B_2\in\cB(\RR)$
and $\Sigma(\RR)=\Id$.

Let $\cH_1$, $\cH_2$ be Hilbert spaces, $K:\cH_2\to\cH_1$ be a bounded linear operator,
and $\Sigma_1$ be a bounded operator-valued spectral measure in $\cH_1$,
then the mapping $\Sigma_2:\cB(\RR)\ni B\mapsto \Sigma_2(B):=K^*\Sigma_1(B)K\in\cL(\cH_2)$
is a bounded operator-valued measure in $\cH_2$ which is called a \emph{dilation} of $\Sigma_1$.
This dilation is \emph{orthogonal} if the above representation holds
with a unitary operator $K$ and is called \emph{minimal} if the closed linear span of the subspaces $\Sigma_1(B)\ran K$,
$B\in\cB(\RR)$, coincides with $\cH_1$. If a bounded operator-valued measure is an orthogonal dilation of another
bounded operator-valued measure, then these two measures are called \emph{unitarily
equivalent}. Note that the spectral measure of a self-adjoint operator is always
an orthogonal operator-valued measure. The following assertion is well known, see e.g. \cite[Chapter 4]{paul}
or \cite{MM}.

\begin{theorem}[Generalized Naimark's dilation theorem]\label{naimark}
Any bounded operator-valued measure $\Sigma$ can be represented as a minimal dilation
of an orthogonal operator-valued measure $\Sigma^0$, and
$\Sigma^0$ is called a \emph{minimal orthogonal operator-valued measure associated with $\Sigma$}. 
If a bounded operator-valued measure can be represented as a minimal orthogonal dilation
of two different orthogonal operator-valued measures, then these two orthogonal
operator-valued measures are unitarily equivalent.
\end{theorem}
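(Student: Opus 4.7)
The plan is to follow the classical GNS-type construction underlying Naimark's theorem, in which both the existence of a minimal orthogonal dilation and its uniqueness up to unitary equivalence ultimately reduce to a single inner-product identity.

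For existence, I would begin from the space $S$ of bounded Borel-measurable simple complex-valued functions on $\RR$ and equip the algebraic tensor product $S\otimes\cH$ with the sesquilinear form
\[
\big\langle s\otimes h,\, s'\otimes h'\big\rangle_0 := \Big\langle h,\,\Big(\int_\RR \overline{s(x)}\, s'(x)\, d\Sigma(x)\Big) h'\Big\rangle,
\]
extended by sesquilinearity, where the operator integral is defined directly on simple functions via $\int\chi_B \chi_{B'}\, d\Sigma := \Sigma(B\cap B')$ and extended linearly. Positivity of $\Sigma$ makes the form non-negative; quotienting by its null space and completing yields a Hilbert space $\cH^0$. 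On classes of elementary tensors I would set $\Sigma^0(B)[s\otimes h] := [\chi_B s\otimes h]$ and $Kh := [\mathbf{1}_\RR\otimes h]$. The algebraic identities $\chi_B^2=\chi_B$ and $\chi_B\chi_{B'}=\chi_{B\cap B'}$, combined with positivity, show that each $\Sigma^0(B)$ extends to a self-adjoint projection on $\cH^0$ and that $\Sigma^0$ is countably additive in the strong topology; likewise $K^*\Sigma^0(B)K h=\Sigma(B)h$ follows directly from the definition of the form. Minimality is built in: the set $\{\Sigma^0(B)Kh\}=\{[\chi_B\otimes h]\}$ spans the classes of simple tensors, hence a dense subspace of $\cH^0$.

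For uniqueness, suppose $\Sigma=K_i^*\Sigma^0_i K_i$ in $\cH_i$, $i=1,2$, are two minimal dilations with $\Sigma^0_i$ orthogonal. On the dense subspaces $\mathcal{D}_i$ given by the linear span of $\{\Sigma^0_i(B)K_i h : B\in\cB(\RR),\ h\in\cH\}$, define $U:\mathcal{D}_1\to\mathcal{D}_2$ on generators by
\[
U\Sigma^0_1(B)K_1 h := \Sigma^0_2(B)K_2 h
\]
and extend linearly. The key identity
\[
\big\langle \Sigma^0_i(B)K_i h,\,\Sigma^0_i(B')K_i h'\big\rangle = \big\langle K_i h,\,\Sigma^0_i(B\cap B')K_i h'\big\rangle = \big\langle h,\,\Sigma(B\cap B')h'\big\rangle
\]
uses precisely the self-adjointness and multiplicativity of $\Sigma^0_i$ together with the dilation relation, and its right-hand side depends only on $\Sigma$. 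This simultaneously shows that $U$ is well-defined on the free span modulo the null space and is isometric, so $U$ extends by minimality to a unitary $\cH_1\to\cH_2$. The intertwining $U\Sigma^0_1(C)=\Sigma^0_2(C)U$ then follows on $\mathcal{D}_1$ from $\Sigma^0_1(C)\Sigma^0_1(B)K_1 h=\Sigma^0_1(C\cap B)K_1 h$ and the defining formula for $U$, and extends to all of $\cH_1$ by density.

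The main obstacle is the careful functional-analytic verification, usually glossed over in compact statements of Naimark's theorem, that the sesquilinear form above is well-defined and non-negative on $S\otimes\cH$ (requiring an honest construction of the operator integral on simple functions) and that $\Sigma^0(B)$ and $K$ extend to bounded operators on the completion $\cH^0$; boundedness of $\Sigma$ is used exactly here, in the estimate $\int \chi_B|s|^2\, d\Sigma \le \int|s|^2\, d\Sigma$ in the operator sense. Once these checks are executed, existence, minimality, and the uniqueness of the dilation up to unitary equivalence all pivot on the single inner-product computation displayed above.
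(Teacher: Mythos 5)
The paper offers no proof of Theorem \ref{naimark} at all: it quotes the result as well known and refers to \cite[Chapter 4]{paul} and \cite{MM}, so there is no internal argument to compare against. Your GNS-type construction is correct and is essentially the standard proof found in those references: positivity of the form on $S\otimes\cH$ follows by rewriting the simple functions over a common finite partition, each $\Sigma^0(B)$ is a well-defined orthogonal projection on the completion by the contraction estimate you cite, minimality is automatic since $\Sigma^0(B)Kh=[\chi_B\otimes h]$ spans the simple tensors, and your single inner-product identity $\langle \Sigma^0_i(B)K_ih,\Sigma^0_i(B')K_ih'\rangle=\langle h,\Sigma(B\cap B')h'\rangle$ does indeed carry both well-definedness and isometry of $U$, with the intertwining relation then giving unitary equivalence in exactly the paper's sense (note the paper's slightly unusual convention that the \emph{compressed} measure $K^*\Sigma^0(\cdot)K$ is called the dilation; your relation $\Sigma(B)=K^*\Sigma^0(B)K$ with density of the span of $\Sigma^0(B)\ran K$ matches it). One small inaccuracy: boundedness of $\Sigma$ is not really consumed by the monotonicity estimate $\int\chi_B|s|^2\,d\Sigma\le\int|s|^2\,d\Sigma$, which only expresses positivity and yields $\|\Sigma^0(B)\|\le 1$ in any case; it is needed so that $\Sigma$ is defined on all of $\cB(\RR)$, making $K$ bounded via $\|Kh\|^2=\langle h,\Sigma(\RR)h\rangle$ and $\Sigma^0(\RR)=\Id$, and so that the strong $\sigma$-additivity of $\Sigma$ on decreasing sequences $C_N\downarrow\emptyset$ transfers to $\Sigma^0$ through the uniform bound on the projections and a density argument --- a step worth writing out, but routine.
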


Let us recall some tools that allows one to obtain some information on the spectral measures
for self-adjoint extensions using the Weyl functions.

Let $\CC_+:=\{z\in\CC:\, \Im z>0\}$ and $\cH$ be a Hilbert space.
A map $\CC_+\ni z\mapsto F(z)\in\cL(\cH)$ is called an (operator-valued) \emph{Herglotz function}
on $\cH$ if $\Im F(z)\ge 0$ for all $z\in\CC_+$. To each Herglotz function $F$ on $\cH$
one can associate a uniquely defined bounded operator-valued measure (bounded Herglotz measure),
in $\cH$, which we denote by $\Sigma^0_F$, 
and two non-negative operators $C_1$ and $C_2$ on $\cH$ such that
\[
F(z)=C_0+C_1 z+\int_{\RR} \dfrac{1+tz}{t-z} \, \Sigma^0_F(dt)
\text{ for all }z\in\CC_+.
\]
On can introduce another operator-valued measure $\Sigma_F$ (unbounded Herglotz measure)
associated with $F$ by the equality
\[
\Sigma_F(B):=\int_B (1+t^2) \Sigma^0_F(dt), \quad B\in\cB_b(\RR).
\]
This operator-valued measure is unbounded in general, but it can be recovered
from the values $F$ by the explicit Stieltjes inversion formula
\begin{equation}
       \label{eq-stil}
\Sigma_F\big((a,b)\big)=\slim_{\delta\to 0+} \slim_{\varepsilon\to 0+}
\dfrac{1}{\pi}\int_{a+\delta}^{b-\delta} \Im F(x+i\varepsilon)\,dx,
\end{equation}
see \cite{A,A2}.
Note that the Weyl function $M(z)$ defined by a boundary triple is always a Herglotz function
and satisfies $M(\Bar z)=M(z)^*$, see e.g. \cite[Proposition 1.21]{BGP07}.
The following fact is known \cite[Lemma 2.12]{ABMN05}:
\begin{prop} \label{prop-sae}
Let $S$ be a closed densely defined symmetric operator in a Hilbert space $\cH$
with equal deficiency indices, and let $(\cG,\Gamma,\Gamma')$ be an associated boundary triple.
Let $M$ be the associated Weyl function and $H^0$ be the restriction of $S^*$ to $\ker \Gamma$.
Assume that $S$ is simple (i.e. has no invariant subspaces on which it is self-adjoint), then
the spectral measure for $H^0$ is a minimal orthogonal operator-valued measure
associated with the bounded operator-valued Herglotz measure $\Sigma^0_M$ associated with $M$.
\end{prop}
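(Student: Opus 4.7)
The plan is to exhibit a bounded operator $K:\cG\to\cH$ such that $\Sigma^0_M(B)=K^*E(B)K$ for every $B\in\cB(\RR)$, where $E$ denotes the spectral measure of $H^0$, and such that the closed linear span of $\bigcup_{B\in\cB(\RR)} E(B)\ran K$ equals $\cH$. By the uniqueness part of Theorem \ref{naimark} this identifies $E$ (up to unitary equivalence) as the minimal orthogonal operator-valued measure dilating $\Sigma^0_M$. The natural candidate is $K:=\gamma(i)\in\cL(\cG,\cH)$, whose range is the deficiency subspace $\ker(S^*-i)$.

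For the dilation identity, I would start from the standard relation $M(z)-M(w)^*=(z-\bar w)\,\gamma(w)^*\gamma(z)$ specialized to $w=z$, which yields $\Im M(z)=(\Im z)\,\gamma(z)^*\gamma(z)$ for $z\in\CC_+$. A direct verification (using $\Gamma|_{\dom H^0}=0$) gives the transformation rule $\gamma(z)=\bigl[\Id+(z-i)(H^0-z)^{-1}\bigr]\gamma(i)$, so the spectral theorem for $H^0$ produces
\[
\langle \Im M(x+i\varepsilon)\phi,\phi\rangle_\cG
=\varepsilon\int_\RR\frac{1+t^2}{(t-x)^2+\varepsilon^2}\,d\mu_\phi(t),
\qquad
\mu_\phi(B):=\|E(B)\gamma(i)\phi\|^2_\cH.
\]
Inserting this into the Stieltjes inversion formula \eqref{eq-stil} and evaluating the Poisson-kernel $x$-integral, the double limit $\varepsilon\to 0^+$, $\delta\to 0^+$ yields $\langle \Sigma_M((a,b))\phi,\phi\rangle=\int_{(a,b)}(1+t^2)\,d\mu_\phi(t)$ on bounded intervals. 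Since $\Sigma^0_M(B)=\int_B(1+t^2)^{-1}\,d\Sigma_M(t)$, this collapses to $\langle\Sigma^0_M(B)\phi,\phi\rangle=\|E(B)\gamma(i)\phi\|^2$; polarization then gives the operator identity $\Sigma^0_M(B)=\gamma(i)^*E(B)\gamma(i)$.

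Minimality is the only point where the simplicity hypothesis enters. Let $\cK\subset\cH$ be the closed linear span of $E(B)\ker(S^*-i)$ over $B\in\cB(\RR)$. Being invariant under the bounded Borel functional calculus of $H^0$, $\cK$ contains $(H^0-z)^{-1}\ker(S^*-i)$ for every $z\notin\spec H^0$; combined with the transformation rule of the previous step this gives $\cK\supseteq\ker(S^*-z)$ for all such $z$. The standard characterization of simplicity (see e.g.\ \cite{DM91}) states that the linear span of these deficiency subspaces is dense in $\cH$, whence $\cK=\cH$. Applying Theorem~\ref{naimark} finishes the argument.

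The main technical obstacle is the Stieltjes-inversion step: justifying the exchange of $\varepsilon\to 0^+$ with the $t$-integration against $(1+t^2)\,d\mu_\phi(t)$, which is in general only $\sigma$-finite. This is handled by splitting $\RR$ into $\{|t|\le R\}$ (dominated convergence using the uniform bound $\frac{1}{\pi}\int_{a+\delta}^{b-\delta}\frac{\varepsilon}{(t-x)^2+\varepsilon^2}\,dx\le 1$) and $\{|t|>R\}$ (a tail estimate of order $\varepsilon(b-a)/t^2$ that vanishes as $\varepsilon\to 0^+$). No other step is truly delicate once the correct $K$ is identified.
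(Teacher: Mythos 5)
Your proposal is correct. Note that for this statement the paper supplies no argument of its own: Proposition \ref{prop-sae} is quoted as known, with a pointer to \cite[Lemma 2.12]{ABMN05}, and your proof is essentially a self-contained reconstruction of that standard argument — the dilating operator $K=\gamma(i)$, the identities $\Im M(z)=(\Im z)\,\gamma(z)^*\gamma(z)$ and $\gamma(z)=\bigl[\Id+(z-i)(H^0-z)^{-1}\bigr]\gamma(i)$ (whence $\gamma(z)\phi=\int\frac{t-i}{t-z}\,E(dt)\gamma(i)\phi$), Stieltjes inversion \eqref{eq-stil} with the Poisson-kernel computation, and simplicity entering only through the density of $\operatorname{span}\{\ker(S^*-z):z\in\CC\setminus\RR\}$ as in \cite{DM91}. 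Your handling of the only delicate analytic point (the possibly infinite measure $(1+t^2)\,d\mu_\phi$, split into $|t|\le R$ with dominated convergence and a tail of order $\varepsilon$) is sound; two routine points you pass over quickly but which do go through are the vanishing of the endpoint atoms $\mu_\phi(\{a+\delta\})$, $\mu_\phi(\{b-\delta\})$ in the $\delta\to 0+$ limit (finiteness of $\mu_\phi$ forces this along any sequence) and the extension of the identity $\Sigma^0_M(B)=\gamma(i)^*E(B)\gamma(i)$ from bounded open intervals to all Borel sets by $\sigma$-additivity and uniqueness of measures agreeing on intervals. Also, as you observe implicitly, the uniqueness half of Theorem \ref{naimark} is not actually needed here — exhibiting the minimal dilation already establishes the proposition; uniqueness is only invoked later, in Proposition \ref{prop-main}.
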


The following proposition combines the above results and provides
a step toward the proof of Theorem \ref{th-main}.

\begin{prop}\label{prop-main} Let the assumptions of Theorem \ref{th-main} be fulfilled,
and let the assertion (a) of Theorem \ref{th-main} hold. Set $N(z):=-M(z)^{-1}$ and let $\Sigma^0_N$
be the associated bounded Herglotz measure. Define its restriction $\Sigma^0_{N,J}$ onto $J$
by $\Sigma^0_{N,J}(B)=\Sigma^0_N(B\cap J)$. If $\Sigma^0_{N,J}$
is a minimal dilation of the spectral measure $E_R$ of the operator $R=\mu\big(T_{m(J)}\big)$,
then the operators $H_J$ and $R$ are unitarily equivalent.
\end{prop}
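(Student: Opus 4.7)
The goal is to exhibit $\Sigma^0_{N,J}$ as a minimal orthogonal dilation (in the sense of Theorem \ref{naimark}) of \emph{both} the spectral measure $E_{H_J}$ of $H_J$ and the spectral measure $E_R$ of $R$. Once this is done, the uniqueness part of Theorem \ref{naimark} yields a unitary equivalence between $E_{H_J}$ and $E_R$, which is precisely the unitary equivalence of $H_J$ and $R$. The second assertion is exactly the hypothesis of the proposition, so the real task is to realize $\Sigma^0_{N,J}$ as a minimal dilation of $E_{H_J}$.

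To do so I would first pass to the dual boundary triple $(\cG,\widetilde\Gamma,\widetilde\Gamma'):=(\cG,-\Gamma',\Gamma)$, which puts $H$ into the role of reference operator. Green's identity and the surjectivity of $(\widetilde\Gamma,\widetilde\Gamma')$ transfer directly, one checks $S^*|_{\ker\widetilde\Gamma}=H$, and a short computation using $\Gamma\gamma(z)=\Id$ and $\Gamma'\gamma(z)=M(z)$ yields that the associated Weyl function equals $\widetilde M(z)=-M(z)^{-1}=N(z)$. Assuming $S$ simple, Proposition \ref{prop-sae} applied to the dual triple then produces an operator $K\in\cL(\cG,\cH)$ satisfying
\[
\Sigma^0_N(B)=K^*E_H(B)K,\qquad B\in\cB(\RR),
\]
together with the minimality condition that the closed linear span of $E_H(B)\ran K$ over $B\in\cB(\RR)$ equals $\cH$.

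The second step is to compress this realization onto $J$. Setting $P:=1_J(H)$ and using $E_H(B\cap J)=PE_H(B)P$ one obtains
\[
\Sigma^0_{N,J}(B)=\Sigma^0_N(B\cap J)=\widetilde K^*E_{H_J}(B)\widetilde K
\]
with $\widetilde K:=PK:\cG\to\ran P$, where $E_{H_J}(B)=E_H(B\cap J)|_{\ran P}$ is the spectral measure of $H_J$. Minimality is inherited: any $\psi\in\ran P$ equals $P\psi$, so the approximations of $\psi$ by finite linear combinations $\sum_i c_i E_H(B_i)K\phi_i$ furnished by the original minimality turn, upon applying $P$, into approximations by $\sum_i c_i E_{H_J}(B_i)\widetilde K\phi_i$. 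Hence $\Sigma^0_{N,J}$ is a minimal dilation of $E_{H_J}$, and Theorem \ref{naimark} finishes the argument.

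The main loose end is the simplicity hypothesis required by Proposition \ref{prop-sae}. If $S$ is not simple one decomposes $\cH=\cH_{\mathrm{s}}\oplus\cH_0$ and $S=S_{\mathrm{s}}\oplus S_0$ with $S_{\mathrm{s}}$ simple and $S_0$ self-adjoint; since $\spec S_0\subseteq\spec H^0$ is disjoint from $J$, one has $P=1_J(H_{\mathrm{s}})\oplus 0$ and $H_J=(H_{\mathrm{s}})_J$, while the Weyl function and the measure $\Sigma^0_N$ are unaffected by the direct summand $S_0$. The whole argument can therefore be carried out on the simple part, and no essential difficulty arises from this reduction.
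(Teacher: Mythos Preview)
Your proposal is correct and follows essentially the same route as the paper: pass to the dual boundary triple $(\cG,-\Gamma',\Gamma)$ so that $H$ becomes the reference operator with Weyl function $N=-M^{-1}$, invoke Proposition~\ref{prop-sae} in the simple case to realize $\Sigma^0_N$ as a minimal dilation of $E_H$, compress to $J$ to obtain a minimal dilation of $E_{H_J}$, and conclude by the uniqueness part of Naimark's theorem; the non-simple case is handled identically via the decomposition $S=S_{\mathrm{s}}\oplus S_0$. Your explicit verification that minimality survives the compression to $J$ is a detail the paper leaves implicit, but otherwise the arguments coincide.
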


\begin{proof}
(a) Assume first that $S$ is a simple operator. Introduce the new boundary triple $(\cG,\widetilde\Gamma,\widetilde\Gamma')$
with $\widetilde\Gamma:=-\Gamma'$ and $\widetilde\Gamma':=\Gamma$.
The associated Weyl function is $N(z):=-M(z)^{-1}$, and is hence also a Herglotz one, 
and the operator $H$ becomes then the restriction of $S^*$ to $\ker\widetilde\Gamma$.
By Proposition \ref{prop-sae} one can
represent $\Sigma_N^0$ as a minimal dilation of the spectral measure $E_H$ of
$H$, $\Sigma_N^0(B)=K^* E_H(B) K$, $K\in \cL(\cG,\cH)$, then
\[
\Sigma^0_{N,J}(B)=\Sigma^0_N(B\cap J)=K^* E_H(B\cap J) K
=L^* E_{H,J}(B) L,
\]
where $E_{H,J}$ defined by $E_{H,J}(B)= E_H (B\cap J)$
is considered as an orthogonal measure in $\cH':=\ran E_H(J)$, and
$L= \Pi K$ with $\Pi:\cH\to\cH'$ being the orthogonal projector.
Therefore, $E_{H,J}$ is another minimal orthogonal measure associated with $\Sigma^0_{N,J}$,
hence $E_R$ and $E_{H,J}$ are unitarily equivalent by Naimark's theorem (Theorem \ref{naimark}).
This means that there exists a unitary $U$ such that $E_{H,J}(B)=U^* E_R(B) U$ for all $B\subset J$,
and
\[
H_J=\int_J t\, E_{H,J}(dt)=U^*\int_J t\, E_R(dt)\, U=U^* R U.
\]

(b) If the operator $S$ is not simple, one can decompose the Hilbert space $\cH$ and the operator $S$
into a direct sum
$\cH=\cH_0\oplus\cK$, $S=S_0\oplus L$, such that $L$ is a self-adjoint operator in $\cK$ and $S_0$ is a closed densely
defined \emph{simple} symmetric operator in $\cH_0$ whose deficiency indices
are equal to those for $S$. Moreover, $(\cG,\Bar\Gamma,\Bar\Gamma')$, where $\Bar \Gamma$ and $\Bar\Gamma'$
are the restrictions of $\Gamma$ and $\Gamma'$ respectively to $\dom S^*_0$,
is a boundary triple for $S_0$ with the same Weyl function $M(z)$. Moreover,
one has $H^0=A^0\oplus L$ and $H=A\oplus L$, where $A^0$ is the restriction
of $S_0^*$ to $\ker\Bar\Gamma$ and $A$ is the restriction of $S_0^*$ to $\ker\Bar\Gamma'$.
One has $J\subset \RR\setminus\spec A^0$ and $J \subset \RR\setminus\spec L$, which means
that $H_J$ is unitarily equivalent to $A_J$. Finally, applying the part (a) to the operators $S_0$, $A$ and $A^0$
one shows that $A_J$ is unitarily equivalent to $R$. 
\end{proof}

\subsection{Technical estimates}

In this section we use the notation and the assumptions introduced in Theorem \ref{th-main}
and Proposition \ref{prop-main}. The aim of this section is to calculate the bounded Herglotz measure
$\Sigma^0_N$ associated to $N$ in terms of the spectral measure for the operator $R$.

Denote
\begin{equation}
    \label{eq-stk}
S_T:=[\inf \spec T,\sup \spec T], \quad K:=m^{-1}(S_T)\cap J.
\end{equation}
The following assertion was proved in \cite[Lemma 3.13]{BGP07}:
\begin{lemma} \label{prop-pm}
For any $x\in K$ one has $m'(x)\ne 0$.
\end{lemma}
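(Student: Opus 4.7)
The plan is to exploit the positivity of $M'(x)$ on the gap $J$, which follows from $M$ being an operator-valued Herglotz function, holomorphic and self-adjoint on $J$. From the special form $M(z)=(m(z)\Id-T)/n(z)$ one computes
\[
n(x)^2\,M'(x)=n'(x)\bigl(T-m(x)\,\Id\bigr)+m'(x)\,n(x)\,\Id,\qquad x\in J,
\]
and testing this non-negative operator against normalized approximate eigenvectors $v_k$ of $T$ at a point $\lambda\in\spec T$ (for which $\langle Tv_k,v_k\rangle\to\lambda$) yields, in the limit, the scalar inequality
\[
n'(x)\bigl(\lambda-m(x)\bigr)+m'(x)\,n(x)\ge 0\qquad\text{for every }\lambda\in\spec T\text{ and every }x\in J.
\]

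Assume, for contradiction, that $m'(x_0)=0$ at some $x_0\in K$, so that $m(x_0)\in[\lambda_-,\lambda_+]$ with $\lambda_-:=\inf\spec T$ and $\lambda_+:=\sup\spec T$. Applied to $\lambda=\lambda_\pm$, the displayed inequality forces exactly one of the following alternatives: (i) $n'(x_0)>0$ and $m(x_0)=\lambda_-$; (ii) $n'(x_0)<0$ and $m(x_0)=\lambda_+$; (iii) $n'(x_0)=0$. In cases (i) and (ii) I would consider the scalar function $\mu_\lambda(z):=(m(z)-\lambda)/n(z)$ with $\lambda:=m(x_0)\in\spec T$. The same approximate-eigenvector passage exhibits $\mu_\lambda$ as a pointwise limit of matrix elements $\langle M(z)v_k,v_k\rangle$, hence as a scalar Herglotz function holomorphic on $J$; a short computation gives $\mu_\lambda(x_0)=\mu_\lambda'(x_0)=0$, and since every non-constant scalar Herglotz function has strictly positive derivative at any real point of holomorphy, $\mu_\lambda$ must be constant, forcing $m\equiv\lambda$ on the common domain of holomorphy. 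In case (iii), the identity $M'(x_0)=0$ combined with the same argument applied to every matrix element $\langle M(\cdot)v,v\rangle$ (and polarization) forces $M$ itself to be constant.

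Both scenarios then lead to a contradiction. In cases (i) and (ii), $M(z)=(\lambda\,\Id-T)/n(z)$ with $\lambda\in\spec T$ has $0$ in its spectrum for every $z$, so $M(z)$ is never boundedly invertible, which conflicts with the Krein resolvent formula, requiring $M(z)$ to be boundedly invertible whenever $z\notin\spec H\cup\spec H^0$ (in particular for every $z\in\CC_+$). In case (iii), the constant $M$ is self-adjoint with $\Im M\equiv 0$, whereas the general identity $\Im M(z)=\Im z\cdot\gamma(z)^*\gamma(z)$ combined with the fact that $\gamma(z)$ is a linear isomorphism $\cG\to\ker(S^*-z)$ would force $\cG=\{0\}$; in that trivial situation $\spec T=\emptyset$, so $K=\emptyset$ and the lemma holds vacuously. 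The most delicate step, I expect, is verifying that $\mu_\lambda$ is genuinely a scalar Herglotz function when $\lambda\in\spec T$ is only an approximate eigenvalue; this rests on the (standard) stability of the Herglotz class under pointwise limits.
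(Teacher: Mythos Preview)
The paper does not prove this lemma itself; it cites \cite[Lemma~3.13]{BGP07}. Your argument is correct, so there is no genuine gap to report, but two remarks are in order.

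First, the step you yourself flag as delicate---that $\mu_\lambda$ is a scalar Herglotz function obtained as a pointwise limit of $\langle M(\cdot)v_k,v_k\rangle$---needs one extra word: pointwise limits of holomorphic functions are not automatically holomorphic. Here the sequence is locally bounded by $\|M(z)\|$, so Vitali/Montel upgrades the convergence to locally uniform, and the limit is holomorphic with $\Im\mu_\lambda\ge 0$. With that said, your cases (i)--(iii) and the contradiction via Krein invertibility / the identity $\Im M(z)=\Im z\,\gamma(z)^*\gamma(z)$ go through.

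Second, the whole case analysis can be avoided by using a \emph{uniform} lower bound on $M'$ rather than mere non-negativity. For $x\in J$ one has $M'(x)=\gamma(x)^*\gamma(x)$, and since $\gamma(x)$ is a linear topological isomorphism from $\cG$ onto $\ker(S^*-x)$ it is bounded below, giving $M'(x)\ge c_x\,\Id$ for some $c_x>0$. Multiplying your identity by $n(x_0)^2$ and testing against approximate eigenvectors at $\lambda_\pm\in\spec T$ yields, under the hypothesis $m'(x_0)=0$,
\[
n'(x_0)\bigl(\lambda_\pm-m(x_0)\bigr)\ \ge\ c_{x_0}\,n(x_0)^2\ >\ 0,
\]
which is immediately incompatible with $\lambda_-\le m(x_0)\le\lambda_+$. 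This bypasses the auxiliary function $\mu_\lambda$, the Herglotz-limit issue, and the appeal to Krein's formula, and is likely closer in spirit to the proof in the cited reference.
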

We will prove below
\begin{lemma}\label{lem4}
The set $K$ is connected.
\end{lemma}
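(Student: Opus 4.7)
The plan is to argue by contradiction, using as the key input the operator Herglotz property of $M$ on the gap $J$. First suppose that $K$ is disconnected. Since $m$ is continuous on $J$ and $S_T$ is closed, the set $K$ is relatively closed in $J$, so I may choose $y_1<y_2$ in $K$ with $(y_1,y_2)\cap K=\emptyset$. On this gap, $m$ takes values outside $S_T$, and by connectedness of the continuous image $m\bigl((y_1,y_2)\bigr)$ this image lies entirely in one of the two components of $\RR\setminus S_T$. By symmetry I treat only the case $m\bigl((y_1,y_2)\bigr)\subset(-\infty,\inf\spec T)$; continuity at $y_1$ and $y_2$ together with $m(y_i)\in S_T$ then forces $m(y_1)=m(y_2)=\lambda$, where $\lambda:=\inf\spec T\in\spec T$.

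The second step is to show that the scalar function $g(z):=\bigl(m(z)-\lambda\bigr)/n(z)$ is non-decreasing on $J$. Because $M$ is operator-valued Herglotz with $M(\bar z)=M(z)^*$ and extends holomorphically across $J$, each matrix element $\langle M(z)\phi,\phi\rangle$ is a scalar Herglotz function holomorphic on $J$, hence non-decreasing there; consequently $M'(z)\ge 0$ in the operator sense on $J$. From \eqref{eq-special} a direct computation gives
\[
n(z)^2 M'(z)=m'(z)n(z)\Id-n'(z)\bigl(m(z)\Id-T\bigr),
\]
so pairing with any unit vector $\phi\in\cG$ yields $m'(z)n(z)-\bigl(m(z)-\langle T\phi,\phi\rangle\bigr)n'(z)\ge 0$. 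Since $T$ is bounded self-adjoint, the closure of its numerical range equals $S_T$; choosing a sequence of unit vectors $\phi_k$ with $\langle T\phi_k,\phi_k\rangle\to\lambda$ and passing to the limit gives $m'(z)n(z)-\bigl(m(z)-\lambda\bigr)n'(z)\ge 0$ for all $z\in J$, i.e.\ $g'(z)\ge 0$ on $J$.

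Combining the two steps, $g$ is non-decreasing on $J$ with $g(y_1)=g(y_2)=0$, so $g\equiv 0$ on $[y_1,y_2]$ and therefore $m\equiv\lambda$ there, contradicting $m(x)<\lambda$ for $x\in(y_1,y_2)$. The symmetric case $m\bigl((y_1,y_2)\bigr)\subset(\sup\spec T,+\infty)$ is handled identically with $\lambda:=\sup\spec T$. The main obstacle I foresee is the uniform-in-$\lambda$ Herglotz monotonicity of $g_\lambda(z)=\bigl(m(z)-\lambda\bigr)/n(z)$ precisely at the boundary values $\lambda=\inf\spec T$ and $\lambda=\sup\spec T$; this is established by approximating these points of $\spec T$ by numerical-range values of $T$ and passing to the limit in the scalar Herglotz monotonicity that already holds for each matrix element $\langle M(z)\phi,\phi\rangle$. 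Once this is in place, the remainder is an elementary case distinction on the alleged gap between consecutive components of $K$, and Lemma \ref{prop-pm} is not even needed.
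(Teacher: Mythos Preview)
Your proof is correct and takes a genuinely different route from the paper's. The paper first establishes, via Lemma~\ref{lem3a} (which in turn rests on the technical Stieltjes-inversion estimate of Lemma~\ref{lem0}), that $m'$ and $n$ have the same strict sign on each connected piece of $K$ meeting $\spec T$; it then observes that across an alleged gap of $K$ the sign of $m'$ flips, forcing $n$ to change sign and hence to vanish inside $J$, which is forbidden. Your argument bypasses all of this: you use directly the operator Herglotz monotonicity $M'(x)\ge 0$ on the gap $J$, pair it against unit vectors whose $T$-expectation approaches the extremal spectral value $\lambda$, and conclude that the scalar function $g=(m-\lambda)/n$ is non-decreasing on $J$; the contradiction then follows because $g$ vanishes at both endpoints of the alleged gap while being strictly negative (or positive) in between. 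This is more elementary and self-contained, requiring neither Lemma~\ref{prop-pm} nor Lemmas~\ref{lem0}--\ref{lem3a}. The paper's longer detour is not wasted, however: the Stieltjes estimates of Lemma~\ref{lem0} are needed anyway for the computation of $\Sigma_N$ in Proposition~\ref{prop2}, so in the paper's logical order Lemma~\ref{lem4} comes essentially for free once that machinery is in place. Your argument, on the other hand, would allow one to prove the connectedness of $K$ (and indeed a weak form of Lemma~\ref{lem3a}) before developing any of the inversion estimates.
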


Let $(a,b)\subset J$. By the Stieltjes inversion formula \eqref{eq-stil} one has
\begin{equation}
     \label{eq-sigma0}
\Sigma^0_N\big((a,b)\big)=\slim_{\delta\to 0+}\slim_{\varepsilon\to 0+}
\dfrac{1}{2\pi i} \int_{a+\delta}^{b-\delta} \Big(N(x+i\varepsilon)-N(x-i\varepsilon)\Big)\, dx.
\end{equation}
On the other hand, there holds
\begin{multline}
     \label{eq-m0}
N(x+i\varepsilon)-N(x-i\varepsilon)=\int_\RR \Big(
\dfrac{n(x+i\varepsilon)}{\lambda-m(x+i\varepsilon)}-\dfrac{n(x-i\varepsilon)}{\lambda-m(x-i\varepsilon)}
\Big)\,E_T(d\lambda)\\
=\int_{S_T}\Big(
\dfrac{n(x+i\varepsilon)}{\lambda-m(x+i\varepsilon)}-\dfrac{n(x-i\varepsilon)}{\lambda-m(x-i\varepsilon)}
\Big)\,E_T(d\lambda),
\end{multline}
where $E_T$ is the spectral measure associated with $T$.

For a Borel subset $I$ of $J$ denote
\begin{equation}
    \label{eq-kI}
k_I(\lambda,\varepsilon)=\dfrac{1}{2\pi i}
\int_I
\Big(
\dfrac{n(x+i\varepsilon)}{\lambda-m(x+i\varepsilon)}
-
\dfrac{n(x-i\varepsilon)}{\lambda-m(x-i\varepsilon)}
\Big)dx.
\end{equation}

Our main technical estimate is the following proposition.

\begin{prop}\label{prop-conv} 
Assume that $I=[a,b]\subset J$. For some $\varepsilon_0>0$ there holds
\begin{equation}
          \label{eq-kbd}
\sup_{\substack{\lambda\in S_T\\ \varepsilon\in (0,\varepsilon_0)}}\big|k_I(\lambda,\varepsilon)\big| <+\infty
\end{equation}
and for any $\lambda\in S_T$ one has
\begin{equation}
              \label{eq-klim}
\lim_{\varepsilon\to 0+}k_I(\lambda,\varepsilon)=\begin{cases}
0, & \lambda\notin m\big([a,b]\big),\\
\dfrac{1}{2}\,\mu'(\lambda) n\big(\mu(\lambda)\big),& \lambda\in \big\{m(a),m(b)\big\},\\
\mu'(\lambda)n\big(\mu(\lambda)\big),& \lambda\in m\big((a,b)\big).
\end{cases}
\end{equation}
Here $\mu$ is the inverse to $K\ni x\mapsto m(x)\in m(K)$; this inverse exists
by Lemmas \ref{prop-pm} and \ref{lem4}.
\end{prop}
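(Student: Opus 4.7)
The plan is to recast $k_I(\lambda,\varepsilon)$ as a contour integral and apply the residue theorem. Since $m$ and $n$ are holomorphic on a complex neighborhood $V$ of $\bar J$ and real-valued on $J$, Schwarz reflection gives $m(\bar z)=\overline{m(z)}$ and $n(\bar z)=\overline{n(z)}$, so for $\lambda\in\RR$ the integrand in \eqref{eq-kI} is twice the imaginary part. Setting $f_\lambda(z):=n(z)/(\lambda-m(z))$ one thus has
\[
k_I(\lambda,\varepsilon)=\frac{1}{\pi}\int_a^b \Im f_\lambda(x+i\varepsilon)\,dx.
\]

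Next I would fix $\varepsilon_0>0$ small enough that $R:=[a,b]\times[-\varepsilon_0,\varepsilon_0]$ lies in $V$, and use Lemma \ref{prop-pm} together with compactness of $S_T$ and the implicit function theorem to show that for every $\lambda\in S_T$ the only zero of $z\mapsto\lambda-m(z)$ in $R$ is the simple real zero $\mu(\lambda)$ (when $\mu(\lambda)\in[a,b]$, and no zero otherwise). For $0<\varepsilon<\varepsilon_0$, applying Cauchy's residue theorem on the rectangular contour $\partial([a,b]\times[-\varepsilon,\varepsilon])$ oriented counterclockwise, together with $\mathrm{Res}_{\mu(\lambda)}f_\lambda=-n(\mu(\lambda))\mu'(\lambda)$, yields the key decomposition
\[
k_I(\lambda,\varepsilon)=n(\mu(\lambda))\mu'(\lambda)\,\mathbf{1}_{(a,b)}(\mu(\lambda))+B(\lambda,\varepsilon),\qquad B(\lambda,\varepsilon):=\frac{1}{2\pi}\int_{-\varepsilon}^{\varepsilon}\bigl[f_\lambda(b+it)-f_\lambda(a+it)\bigr]\,dt,
\]
valid whenever $\mu(\lambda)\notin\{a,b\}$. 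The exceptional cases $\lambda\in\{m(a),m(b)\}$ would be treated by indenting the offending vertical side of the contour with a small semicircle of radius $\rho\to 0$ around the vertex; the standard half-residue contribution then adds $\tfrac{1}{2}\mu'(\lambda)n(\mu(\lambda))$ to the above formula in the limit.

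From this decomposition the pointwise statement \eqref{eq-klim} follows at once. For $\lambda\in S_T$ with $\lambda\notin\{m(a),m(b)\}$ the function $f_\lambda$ is bounded on both vertical segments, so $|B(\lambda,\varepsilon)|=O(\varepsilon)\to 0$, and the indicator selects between $\mu'(\lambda)n(\mu(\lambda))$ (when $\mu(\lambda)\in(a,b)$) and $0$ (when $\mu(\lambda)\notin[a,b]$); the endpoint case comes from the half-residue above.

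The hardest part will be the uniform bound \eqref{eq-kbd}. The residue contribution $n(\mu(\lambda))\mu'(\lambda)=n(\mu(\lambda))/m'(\mu(\lambda))$ is already uniformly bounded on $S_T$, since $K\cap[a,b]$ is compact in $J$ and $m'$ is nonzero there by Lemma \ref{prop-pm}. The delicate issue is that as $\lambda\to m(a)$ or $\lambda\to m(b)$, the integrand in $B(\lambda,\varepsilon)$ develops a pole on the corresponding vertical segment, so the naive estimate $|B|\le 2\varepsilon\sup|f_\lambda|$ blows up. However, near such a pole one has $f_\lambda(a+it)\approx n(a)/(\delta-im'(a)t)$ with $\delta:=\lambda-m(a)$, and the explicit evaluation $\int_{-\varepsilon}^{\varepsilon}dt/(\delta-im'(a)t)=(2/m'(a))\arctan(m'(a)\varepsilon/\delta)$ stays bounded by $\pi/|m'(a)|$ uniformly in $\delta$ and $\varepsilon$. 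Isolating this leading singular contribution from a uniformly bounded regular remainder then gives \eqref{eq-kbd}.
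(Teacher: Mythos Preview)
Your contour-integral approach is correct and genuinely different from the paper's proof. The paper proceeds by real-variable methods: it first establishes the result under the additional hypothesis $m'\ne 0$ on all of $I$ (Lemma~\ref{lem0}) by writing
\[
\frac{1}{\lambda-m(x+i\varepsilon)}=\frac{1+\varepsilon g(x,\lambda,\varepsilon)}{\lambda-m(x)-i\varepsilon m'(x)}
\]
with a uniformly bounded $g$ (Lemma~\ref{lem1}), and then reducing $k_I$ to a Poisson-type integral that is evaluated explicitly after the change of variable $y=m(x)$. The general case is obtained by splitting $I$ into the part lying in a neighborhood of $K$ (where $m'\ne 0$, so Lemma~\ref{lem0} applies) and the complementary part (where $m(x)\notin S_T$, so the integrand has no singularity and the contribution vanishes).

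Your residue argument replaces this decomposition with a single application of Cauchy's theorem: the limit value $\mu'(\lambda)n(\mu(\lambda))$ drops out as minus the residue, and the uniform bound becomes a local estimate on the vertical sides. This is more conceptual and avoids the case split; in exchange, two points need a little more care than you indicate. First, the claim that $\mu(\lambda)$ is the \emph{only} zero of $\lambda-m(z)$ in the rectangle for all $\lambda\in S_T$ simultaneously requires a compactness argument over $[a,b]\times S_T$ (Lemma~\ref{prop-pm} alone controls only the real zeros on $K$). Second, for the endpoint values $\lambda\in\{m(a),m(b)\}$ the straight portions of the indented vertical side carry a logarithmically divergent principal value whose cancellation should be made explicit before invoking the half-residue. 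Both are routine, and your estimate isolating $n(a)/(\delta-im'(a)t)$ from a uniformly $O(1)$ remainder is exactly what is needed for \eqref{eq-kbd}; this is essentially the same computation that the paper packages into Lemma~\ref{lem1}.
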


To prove proposition \ref{prop-conv} let us make some preliminary steps.
\begin{lemma} \label{lem1}
Let $I\subset J$ be a closed segment such that $m'(x)\ne 0$ for $x\in I$.
Then, for some $\varepsilon_0>0$ and for all $x\in I$, $\lambda\in \RR$ and $0<|\varepsilon|<\varepsilon_0$
there holds
\begin{equation}
         \label{eq-res1}
\dfrac{1}{\lambda-m(x+i\varepsilon)}=\dfrac{1}{\lambda -m(x) -i\varepsilon m'(x)}
\cdot\Big(
1 + \varepsilon\, g(x,\lambda,\varepsilon)
\Big),
\end{equation}
where
\[
\sup_{\substack{x\in I,\, \lambda\in \RR\\ 0<|\varepsilon|<\varepsilon_0}} \big|g(x,\lambda,\varepsilon)\big|<+\infty.
\]
\end{lemma}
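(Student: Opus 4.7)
The plan is to obtain the factorization by a first-order Taylor expansion of $m$ around the real point $x$, and then to control the resulting error by the fact that the imaginary part of the leading-order denominator is uniformly bounded away from zero on the compact set $I$.

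First I would use that, since $m$ is holomorphic on a neighborhood of $J$ in $\CC$, there exist a complex neighborhood $U \supset I$ and an $\varepsilon_0 > 0$ such that $\{x+i\varepsilon : x\in I,\ |\varepsilon|<\varepsilon_0\}\subset U$ and $m''$ is bounded on a compact set containing this strip. Then Taylor's theorem gives
\[
m(x+i\varepsilon)=m(x)+i\varepsilon m'(x)+r(x,\varepsilon),\qquad |r(x,\varepsilon)|\le C\varepsilon^2,
\]
uniformly for $x\in I$ and $|\varepsilon|<\varepsilon_0$. Since $m'$ is continuous and non-vanishing on the compact $I$, there is some $c>0$ such that $|m'(x)|\ge c$ on $I$; as $\lambda-m(x)$ is real for $\lambda\in\RR$ and $x\in I$, we get the purely imaginary-part estimate
\[
\bigl|\lambda-m(x)-i\varepsilon m'(x)\bigr|\ge |\varepsilon m'(x)|\ge c|\varepsilon|,
\]
uniform in $\lambda\in\RR$. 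This is the key input.

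Then I would simply solve the desired identity algebraically for $g$: writing $D:=\lambda-m(x)-i\varepsilon m'(x)$ one has $\lambda-m(x+i\varepsilon)=D-r(x,\varepsilon)$, so
\[
\frac{1}{\lambda-m(x+i\varepsilon)}=\frac{1}{D}\cdot\frac{1}{1-r(x,\varepsilon)/D},
\]
which, after comparison with the prescribed factorization, yields
\[
\varepsilon\, g(x,\lambda,\varepsilon)=\frac{r(x,\varepsilon)/D}{1-r(x,\varepsilon)/D}.
\]
Shrinking $\varepsilon_0$ so that $|r(x,\varepsilon)/D|\le C\varepsilon^2/(c|\varepsilon|)=(C/c)|\varepsilon|\le 1/2$, the denominator on the right is bounded away from $0$, and we obtain $|\varepsilon g|\le 2(C/c)|\varepsilon|$, hence $|g(x,\lambda,\varepsilon)|\le 2C/c$ uniformly in $x\in I$, $\lambda\in\RR$ and $0<|\varepsilon|<\varepsilon_0$, as required.

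The main technical point, which is not really an obstacle but the content of the lemma, is ensuring that the bound is uniform in $\lambda\in\RR$; this is automatic because all estimates on $D$ use only its imaginary part $-\varepsilon m'(x)$, which does not involve $\lambda$. The only place one must be a little careful is in justifying the holomorphic extension of $m$ to a complex neighborhood of $I$, but this follows from the hypothesis of Theorem \ref{th-main} that $m$ is holomorphic outside $\spec H^0$ and admits a holomorphic continuation to $J\supset I$.
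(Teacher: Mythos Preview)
Your proof is correct and follows essentially the same route as the paper: Taylor-expand $m(x+i\varepsilon)$ to second order, use compactness of $I$ and $m'\ne 0$ to bound the linearized denominator from below by $c|\varepsilon|$ uniformly in $\lambda\in\RR$, and conclude. In fact your $g$ coincides with the paper's, since $\varepsilon g=\dfrac{r/D}{1-r/D}=\dfrac{r}{D-r}=\dfrac{m(x+i\varepsilon)-m(x)-i\varepsilon m'(x)}{\lambda-m(x+i\varepsilon)}$, which is exactly the expression the paper writes down directly.
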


\begin{proof}

There holds
\begin{equation}
      \label{eq-inv}
\dfrac{1}{\lambda-m(x+i\varepsilon)}=
\dfrac{f(x,\lambda,\varepsilon)}{\lambda-m(x)-i\varepsilon m'(x)}
\end{equation}
with
\begin{equation}
f(x,\lambda,\varepsilon)= \dfrac{\lambda-m(x)-i\varepsilon m'(x)}{\lambda-m(x+i\varepsilon)}=
1+ \dfrac{m(x+i\varepsilon)-m(x)-i\varepsilon m'(x)}{\lambda-m(x+i\varepsilon)}.
\end{equation}

Due to the analyticity of $m$, there exists $C>0$ such that 
\begin{equation}
         \label{eq-est1}
\big|m(x)+i\varepsilon m'(x) - m(x+i\varepsilon)\big|\le C\varepsilon^2 \quad \text{for all }
x\in I, |\varepsilon|<\varepsilon_0.
\end{equation}
On the other hand, denoting $k=\inf_{x\in I} |m'(x)|>0$,
one has $\big|\lambda-m(x)-i\varepsilon m'(x)\big|\ge k|\varepsilon|$.
Therefore, one can find $c>0$ such that
\begin{equation}
        \label{eq-est2}
\big|\lambda-m(x+i\varepsilon)\big|\ge c  |\varepsilon|
\text{ for all  $\lambda\in\RR$, $x\in I$, $|\varepsilon|\le \varepsilon_0$.}
\end{equation}
Using \eqref{eq-est1} and \eqref{eq-est2} one obtains, with $b=C/c>0$, 
\[
\Big|\dfrac{m(x+i\varepsilon)-m(x)-i\varepsilon m'(x)}{\lambda-m(x+i\varepsilon)}\Big|\\
\le b\varepsilon \text{ for all $x\in I$, $\lambda\in\RR$, $0<|\varepsilon|<\varepsilon_0$.} \qedhere
\] 
\end{proof}

\begin{lemma} \label{lem0} The result of proposition \ref{prop-conv} holds under the additional assumption
\[
m'(x)\ne 0 \text{ for all } x\in I.
\]
\end{lemma}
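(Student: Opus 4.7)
The plan is to decompose the bracket in~\eqref{eq-kI} via Lemma~\ref{lem1} into a principal part that behaves like an $x$-space Poisson kernel concentrated near $x=\mu(\lambda)$, plus a remainder of order $O(\varepsilon)$ uniformly in $(x,\lambda)$. Writing $n(x\pm i\varepsilon)=n(x)+O(\varepsilon)$ by analyticity of $n$ and using \eqref{eq-res1}, one obtains
\[
\dfrac{n(x\pm i\varepsilon)}{\lambda-m(x\pm i\varepsilon)}=\dfrac{n(x)}{\lambda-m(x)\mp i\varepsilon m'(x)}+\dfrac{\varepsilon\, r_\pm(x,\lambda,\varepsilon)}{\lambda-m(x)\mp i\varepsilon m'(x)}
\]
with $r_\pm$ uniformly bounded. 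Dividing the resulting difference by $2\pi i$ yields the main integrand
\[
\Phi(x,\lambda,\varepsilon)=\dfrac{n(x)\,m'(x)}{\pi}\cdot\dfrac{\varepsilon}{(\lambda-m(x))^2+\varepsilon^2\, m'(x)^2}
\]
plus a remainder whose absolute value is bounded by $C\varepsilon\bigl[(\lambda-m(x))^2+\varepsilon^2\, m'(x)^2\bigr]^{-1/2}$.

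Since $m'$ is continuous and nonzero on $I$, it has constant sign, so $m|_I$ is a diffeomorphism onto the closed interval with endpoints $m(a)$ and $m(b)$; its inverse coincides with $\mu$ on $m(I)\cap S_T$ (because if $\lambda\in S_T\cap m(I)$ then $\mu(\lambda)\in I\cap K$). The substitution $y=m(x)$ transforms $\int_I\Phi\,dx$ into
\[
\int_{m(a)}^{m(b)}\dfrac{n(\mu(y))}{\pi}\cdot\dfrac{\varepsilon\,dy}{(\lambda-y)^2+\varepsilon^2\, m'(\mu(y))^2}.
\]
With $k:=\inf_I|m'|>0$, the elementary bound $\int_\RR\varepsilon\,dy/[(\lambda-y)^2+\varepsilon^2 k^2]=\pi/k$ yields the uniform estimate~\eqref{eq-kbd}, once one checks that the remainder contributes at most $\frac{C\varepsilon}{k}\int_{m(I)}dy/\sqrt{(\lambda-y)^2+\varepsilon^2 k^2}=O(\varepsilon\log(1/\varepsilon))\to 0$ uniformly in $\lambda\in S_T$. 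To pass to the limit, substitute further $u=(y-\lambda)/\varepsilon$; the main integral becomes
\[
\int_{(m(a)-\lambda)/\varepsilon}^{(m(b)-\lambda)/\varepsilon}\dfrac{n(\mu(\lambda+\varepsilon u))}{\pi\bigl(u^2+m'(\mu(\lambda+\varepsilon u))^2\bigr)}\,du.
\]
Dominated convergence with majorant $\sup|n|/[\pi(u^2+k^2)]$ then identifies the $\varepsilon\to 0^+$ limit: for $\lambda\in m((a,b))$ the domain of integration expands to all of $\RR$, producing $n(\mu(\lambda))/m'(\mu(\lambda))=\mu'(\lambda)\,n(\mu(\lambda))$; for $\lambda\in\{m(a),m(b)\}$ it expands to a half-line, producing half that value; for $\lambda\notin m([a,b])$ both endpoints escape to $\pm\infty$ on the same side, and the integral vanishes.

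The chief obstacle I expect is orientation bookkeeping: if $m'<0$ on $I$, then $m(a)>m(b)$ so the $y$-integral is traversed in the reverse direction and its $\varepsilon\to 0$ limit carries an extra minus sign, while simultaneously $\mu'(\lambda)=1/m'(\mu(\lambda))<0$; one must check carefully that the two sign flips combine to reproduce the single formula $\mu'(\lambda)n(\mu(\lambda))$ in both monotonicity regimes. A secondary technical point is that the remainder bound is only $O(\varepsilon\log(1/\varepsilon))$, so one must use the $L^1$-type estimate $\int dy/\sqrt{(\lambda-y)^2+\varepsilon^2 k^2}=O(\log(1/\varepsilon))$ rather than a naive pointwise bound in $x$, in order to absorb the remainder inside both \eqref{eq-kbd} and \eqref{eq-klim}.
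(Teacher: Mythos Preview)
Your proposal is correct and follows essentially the same route as the paper: decompose via Lemma~\ref{lem1} and the expansion $n(x\pm i\varepsilon)=n(x)+O(\varepsilon)$, isolate the Poisson-kernel main term $\Phi$, and compute its limit by the successive substitutions $y=m(x)$ and $u=(y-\lambda)/\varepsilon$ together with dominated convergence against $\sup|n|/[\pi(u^2+k^2)]$. The paper disposes of the orientation issue by assuming $m'>0$ without loss of generality (replacing $T,m,n$ by their negatives otherwise), which is a quicker alternative to tracking the two sign flips you mention.

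One remark on your final paragraph: the worry about the remainder is unfounded, and the $O(\varepsilon\log(1/\varepsilon))$ estimate, while correct, is unnecessary. Your own pointwise bound already gives
\[
\Bigl|\dfrac{\varepsilon\, r_\pm(x,\lambda,\varepsilon)}{\lambda-m(x)\mp i\varepsilon m'(x)}\Bigr|
\le \dfrac{C\varepsilon}{\bigl|\varepsilon\, m'(x)\bigr|}\le \dfrac{C}{k},
\]
a constant independent of $(x,\lambda,\varepsilon)$. This immediately yields the remainder contribution to~\eqref{eq-kbd}, and since the pointwise limit of the integrand is $0$ for every $x$ with $m(x)\ne\lambda$ (all but at most one point of $I$), dominated convergence with the constant majorant $C/k$ gives the vanishing of the remainder in~\eqref{eq-klim}. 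This is exactly what the paper does; the ``naive pointwise bound in $x$'' suffices.
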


\begin{proof}
Let us take the same $\varepsilon_0$ as in Lemma \ref{lem1}.
Using the representation \eqref{eq-res1} one can write
\begin{multline} 
       \label{eq-k1}
k_I(\lambda,\varepsilon)=\dfrac{1}{2\pi i}\int_{a}^{b}
\Bigg[
\dfrac{n(x+i\varepsilon)\cdot\Big(
1 + \varepsilon \, g(x,\lambda,\varepsilon)
\Big)}{\lambda-m(x)-i\varepsilon m'(x)}\\
-
\dfrac{n(x-i\varepsilon)\cdot\Big(
1 -\varepsilon\, g(x,\lambda,-\varepsilon)
\Big)}{\lambda-m(x)+i\varepsilon m'(x)}
\Bigg]\,dx.
\end{multline}
As $n$ is holomorph, one can write $n(x+i\varepsilon)=n(x)+\varepsilon p(x,\varepsilon)$
with
\[
\sup_{\substack{x\in I\\ |\varepsilon|<\varepsilon_0}} \big|p(x,\varepsilon)\big|<+\infty.
\]
Substituting this representation into \eqref{eq-k1} one obtains
\begin{multline}
k_I(\lambda,\varepsilon)=\underbrace{\dfrac{1}{2\pi i}\int_{a}^{b}
n(x)\Big(
\dfrac{1}{\lambda-m(x)-i\varepsilon m'(x)}
-
\dfrac{1}{\lambda-m(x)+i\varepsilon m'(x)}
\Big)dx}_{=:I_1(\lambda,\varepsilon)}
\\
+
\underbrace{ 
\dfrac{1}{2\pi i}\int_{a}^{b} 
\dfrac{\varepsilon r(x,\lambda, \varepsilon)}{\lambda-m(x)-i\varepsilon m'(x)}\,dx}_{=:I_2(\lambda,\varepsilon)}
+
\underbrace{ 
\dfrac{1}{2\pi i}\int_{a}^{b} 
\dfrac{\varepsilon r(x,\lambda, -\varepsilon)}{\lambda-m(x)+i\varepsilon m'(x)}\,dx}_{=:I_3(\lambda,\varepsilon)}.
\end{multline}
with
\[
r(x,\lambda,\varepsilon):= p(x,\varepsilon)\big(1+\varepsilon g(x,\lambda,\varepsilon)\big) + n(x) g(x,\lambda,\varepsilon).
\]
One has obviously
\[
\sup_{\substack{x\in I,\, \lambda\in \RR\\ 0<|\varepsilon|<\varepsilon_0}} \big|r(x,\lambda,\varepsilon)\big|=:C<+\infty
\]
Denoting
\[
k=\inf_{x\in [a,b]} |m'(x)|>0
\]
one can estimate, for all $\lambda\in\RR$ and $0<|\varepsilon|<1$,
\begin{equation} 
           \label{eq-rbd}
\Big|
\dfrac{\varepsilon r(x,\lambda, \varepsilon)}{\lambda-m(x)+i\varepsilon m'(x)}
\Big|\le \dfrac{R}{k}.
\end{equation}
Therefore, one has
\[
\big|
I_{2,3}(\lambda,\varepsilon)
\big|\le \dfrac{R|b-a|}{2\pi k}
\text{ for all $\lambda\in\RR$ and $0<|\varepsilon|<1$.}
\]
Let us study the expression for $I_1$. By elementary transformations one obtains
\[
I_1(\lambda,\varepsilon)=
\dfrac{1}{\pi}\,\int_{a}^{b}\dfrac{\varepsilon m'(x) n(x)}{\big(\lambda-m(x)\big)^2 + \big(\varepsilon m'(x)\big)^2}\, dx.
\]
Denoting $N:=\sup_{x\in I} \big|n(x)\big|$ one obtains
\begin{multline*}
|I_1|\le \dfrac{N}{\pi} \int_{a}^{b} \dfrac{\big|m'(x)\big|}{\big(\lambda-m(x)\big)^2 + \varepsilon^2 k^2}\,dx\\
=\dfrac{N}{\pi} \Big|
\int_{m(a)}^{m(b)} \dfrac{\varepsilon}{(\lambda -y)^2+\varepsilon^2k^2}\,dy
\le
\dfrac{N}{\pi} \int_{-\infty}^{+\infty} \dfrac{\varepsilon}{y^2+\varepsilon^2k^2}\,dy =\dfrac{N}{k}.
\end{multline*}
The estimate \eqref{eq-kbd} is proved.

To show the equalities \eqref{eq-klim} let us study first the limits of $I_2$ and $I_3$.
By \eqref{eq-rbd} and due to the boundedness of $(a,b)$ one obtains by virtue of
the Lebesgue dominated convergence
\[
\lim_{\varepsilon\to 0+} I_2(\lambda,\varepsilon)=
\int_{a}^{b} \lim_{\varepsilon\to 0+} \dfrac{\varepsilon r(x,\lambda, \varepsilon)}{\lambda-m(x)+i\varepsilon m'(x)}\, dx,
\]
note that for $x$ satisfying $\lambda\ne m(x)$ (which can be violated for at most one point of
$[a,b]$) one has
\[
\lim_{\varepsilon\to 0+} \dfrac{\varepsilon r(x,\lambda, \varepsilon)}{\lambda-m(x)+i\varepsilon m'(x)}=0.
\]
Therefore, $\lim_{\varepsilon\to 0+} I_2(\lambda,\varepsilon)$. By the same arguments, 
$\lim_{\varepsilon\to 0+} I_3(\lambda,\varepsilon)$

To study the limit of $I_1$ we assume without loss of generality that $m'(x)>0$ on $I$ (otherwise one changes the signs of $T$, $m$ and $n$).
Introduce a new variable $y=m(x)$;
by the implicit function theorem one has $x=\varphi(y)$ and $\varphi'(y)=\big(m'(x)\big)^{-1}$.
This gives
\[
I_1(\lambda,\varepsilon)=
\dfrac{1}{\pi}\,\int_{m(a)}^{m(b)}
\dfrac{\varepsilon n\big(\varphi(y)\big)}{\big(\lambda-y\big)^2 + \dfrac{\varepsilon^2}{\varphi'(y)^2}}\,dy.
\]
Introducing another new variable $z=\dfrac{y-\lambda}{\varepsilon}$ one arrives at
\begin{equation}
           \label{eq-I1}
I_1(\lambda,\varepsilon)=
\dfrac{1}{\pi}\,\int_{\frac{m(a)-\lambda}{\varepsilon}}^{\frac{m(b)-\lambda}{\varepsilon}}
\dfrac{n\big(\varphi(\varepsilon z+\lambda)\big)}{z^2 + \dfrac{1\mathstrut}{\varphi'(\varepsilon z+\lambda)^2}}\,dy.
\end{equation}
One has
\[
\sup_{\frac{m(a)-\lambda}{\varepsilon}\le z\le \frac{m(b)-\lambda}{\varepsilon}}
\big|n\big(\varphi(\varepsilon z+\lambda)\big|=\sup_{a\le x\le b}\big|n(x)\big|\le N
\]
and 
\[
\inf_{\frac{m(a)-\lambda}{\varepsilon}\le z\le \frac{m(b)-\lambda}{\varepsilon}} \dfrac{1}{\varphi'(\varepsilon z+\lambda)^2}=
\inf_{a\le x\le b} m'(x)^2=k^2>0,
\]
therefore,
\[
\Bigg| \dfrac{n\big(\varphi(\varepsilon z+\lambda)\big)}{z^2 + \dfrac{1\mathstrut}{\varphi'(\varepsilon z+\lambda)^2}}\Bigg|
\le \dfrac{N}{z^2+\mu^2} \in L^1(\RR).
\]
Hence one has due to the Lebesgue dominated convergence
\[
\lim_{\varepsilon\to 0+} I_1(\lambda,\varepsilon)
=\dfrac{1}{\pi}\,\int_{\lim_{\varepsilon\to 0+}\frac{m(a)-\lambda}{\varepsilon}}^{\lim_{\varepsilon\to 0+}\frac{m(b)-\lambda}{\varepsilon}}
\lim_{\varepsilon\to 0+}\dfrac{n\big(\varphi(\varepsilon z+\lambda)\big)}{z^2 + \dfrac{1\mathstrut}{\varphi'(\varepsilon z+\lambda)^2}}\,dy.
\]
Recall that (for $a\ne 0$)
\[
\int_{-\infty}^{0}\dfrac{dt}{a^2+t^2}=\int_0^{+\infty}\dfrac{dt}{a^2+t^2}=\dfrac{1}{2}\int_{-\infty}^{+\infty}\dfrac{dt}{a^2+t^2}=\dfrac{\pi}{2|a|}.
\]
Clearly, for any $c\in\ J$
\[
\lim_{\varepsilon\to 0+}\frac{m(c)-\lambda}{\varepsilon}=\begin{cases}
+\infty, & \lambda<m(c)\\
0 & \lambda=m(c)\\
-\infty, & \lambda>m(c)
\end{cases}
\]
and that for $m(a)\le \lambda\le m(b)$ there holds
\[
\lim_{\varepsilon\to 0+}\dfrac{n\big(\varphi(\varepsilon z+\lambda)\big)}{z^2 + \dfrac{1\mathstrut}{\varphi'(\varepsilon z+\lambda)^2}}=
\dfrac{n\big(\varphi(\lambda)\big)}{z^2+\dfrac{1\mathstrut}{\varphi'(\lambda)^2}}.
\]
It remains to note that $\mu(x)=\varphi(x)$ for $x\in m(I\cap K)$.
The equalities \eqref{eq-klim} are hence obtained.
\end{proof}

\begin{lemma} \label{lem3a}
Let $L$ be a connected subset of $K$
such that $m(L)\cap\spec T\ne\emptyset$,
then the functions $m'$ and $n$ are either both strictly positive
on both strictly negative in $L$. 
\end{lemma}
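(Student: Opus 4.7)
My plan is to exploit the Herglotz nature of $M$ on the real interval $J$: since $m$, $n$ are real on $J$ and $n \ne 0$, the operator $M(x)$ is self-adjoint for $x \in J$, and the Herglotz property forces the operator-valued derivative $M'(x)$ to be non-negative on $J$. Computing $M'(x)$ explicitly from \eqref{eq-special} and evaluating the resulting inequality at a point where $m(x) \in \spec T$ will yield the required sign relation between $m'$ and $n$ at that point; the connectedness of $L$ will then spread it.

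More concretely, I would first observe that for each fixed $\phi \in \cG$ the scalar function $z \mapsto \langle \phi, M(z)\phi\rangle$ is Herglotz on $\CC_+$ and real on $J$, so its imaginary part vanishes on $J$ while being non-negative for $\Im z > 0$; the Cauchy--Riemann equations at $\varepsilon = 0$ then give $\tfrac{d}{dx}\langle\phi, M(x)\phi\rangle \ge 0$, i.e. $M'(x) \ge 0$ as an operator for all $x \in J$. The quotient rule applied to \eqref{eq-special} yields
\[
M'(x) = \frac{\bigl(m'(x)n(x) - m(x)n'(x)\bigr)\Id + n'(x)\, T}{n(x)^2},
\]
which is of the form $\alpha(x)\Id + \beta(x)\, T$ with real scalar coefficients, so by functional calculus $M'(x) \ge 0$ is equivalent to $\alpha(x) + \beta(x) t \ge 0$ for every $t \in \spec T$. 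Picking $x_0 \in L$ with $t_0 := m(x_0) \in \spec T$ (which exists by the hypothesis $m(L) \cap \spec T \neq \emptyset$) and substituting $t = t_0$ produces a cancellation of the $m(x_0) n'(x_0)$ terms and leaves $m'(x_0)\, n(x_0) \ge 0$; by Lemma \ref{prop-pm} and the standing assumption $n \ne 0$ on $J$, both factors are nonzero, so in fact $m'(x_0)\, n(x_0) > 0$.

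It remains to propagate this sign information to all of $L$. Since $L$ is a connected subset of $\RR$, hence an interval, and since $m'$ and $n$ are continuous (indeed holomorphic) and non-vanishing on $L$ (by Lemma \ref{prop-pm} and the hypothesis on $n$), each of them has a constant sign on $L$. Combined with the strict inequality at $x_0$, this forces $m'$ and $n$ to be either both strictly positive or both strictly negative throughout $L$. The only non-routine step I expect is the passage from the upper half-plane Herglotz condition to the operator inequality $M'(x) \ge 0$ on $J$, which is standard but needs the short scalarization argument above; the rest is the algebra of the quotient rule and the functional calculus for affine expressions in $T$.
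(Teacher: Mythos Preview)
Your argument is correct, and it is genuinely different from the paper's. The paper works with $N=-M^{-1}$ rather than with $M$: from $\Im N(x+i\varepsilon)\ge 0$ it writes down the spectral integral for the imaginary part, integrates over a subinterval of $L$ and passes to the limit $\varepsilon\to 0^+$ using Lemma~\ref{lem0} (the special case of Proposition~\ref{prop-conv} on intervals where $m'\ne 0$), obtaining $n(\mu(\lambda))\mu'(\lambda)\ge 0$; substituting $\lambda=m(y)$ then gives $n(y)/m'(y)>0$ and the connectedness argument finishes as in your last paragraph. Your route bypasses the Stieltjes-type limit computations entirely: the standard monotonicity of a self-adjoint Herglotz function on a real interval of holomorphy gives $M'(x)\ge 0$ directly, and the quotient-rule computation of $M'(x)$ together with the functional-calculus evaluation at $t=m(x_0)\in\spec T$ produces the sign relation by pure algebra. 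This is shorter and logically independent of Lemma~\ref{lem0}; the paper's approach has the advantage of reusing machinery that it needs anyway for the measure calculations in Proposition~\ref{prop2}, so in context it costs almost nothing extra.
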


\begin{proof}
Take $\lambda\in\spec T$ such that $\lambda\in m(L)$. 
As $\Im N(x+i\varepsilon)>0$ for $\varepsilon>0$, one has
\[
\dfrac{1}{2i}\,\Big(
\dfrac{n(x+i\varepsilon)}{\lambda-m(x+i\varepsilon)} - \dfrac{n(x-i\varepsilon)}{\lambda-m(x-i\varepsilon)}
\Big)\ge 0
\] 
for all $x\in\RR$.
Integrating this inequality on any $[a,b]\subset L$ such that
$\lambda\in m\big([a,b]\big)$ and passing to the limit as $\varepsilon\to 0+$
we obtain, by Lemma \ref{lem0}, $n\big(\mu(\lambda)\big)\mu'(\lambda)\ge 0$.
Let $\lambda=m(y)$, $y\in L$, then $0\le n\Big(\mu\big(m(y)\big)\Big)\mu'\big(m(y)\big)=\dfrac{n(y)}{m'(y)}$.
On the other hand, $n(y)\ne 0$ by assumption and $m'(y)\ne 0$ by Lemma \ref{prop-pm},
hence the inequality is strict, hence $m'(y)$ and $n(y)$ are either both negative or both positive.
As the two functions $m'$ and $n$ are continuous and do not vanish in the connected set $L$,
they have the same sign in whole $L$.
\end{proof}

Now we are able to show that $K$ has a rather simple structure given in Lemma \ref{lem4}.
\begin{proof}[Proof of Lemma \ref{lem4}] 
If the set $K$ is not connected,
then there are two different values $x_1,x_2\in J$ with $m(x_1)=m(x_2)=\tau$
with $\tau\in\big\{\inf\spec T,\sup\spec T\big\}$ (automatically $\tau\in\spec T$).
Due to analyticity of $m$ and without loss of generality one can
assume that $\tau=\sup\spec T$, that $x_1<x_2$ and that $m(x)>\tau$ for $x_1<x<x_2$.
Then $m'(x_1)>0$ and $m'(x_2)<0$. By the Lemma \ref{lem3a}, one has $n(x_1)>0$ and $n(x_2)<0$,
therefore, $n$ has to vanish in at least one point of the interval $(x_1,x_2)\subset J$,
which is impossible.
\end{proof}

Now we can prove the complete version of proposition \ref{prop-conv}.
\begin{proof}[Proof of Proposition \ref{prop-conv}]
By Lemma \ref{lem4}, there exists a bounded open interval $\Omega$
containing $m^{-1}(S_T)\cap J$ such that $m'(x)\ne 0$ for $x\in \Omega$.
Denote $L:=I\cap\Bar\Omega$ and $P:=\overline{I\setminus L}$.
One has $k_I(\lambda,\varepsilon)=k_P(\lambda,\varepsilon)+k_L(\lambda,\varepsilon)$.

Consider  the term $k_P$. As $m(P)\cap S_T=\emptyset$ by construction,
the subintegral expression
in \eqref{eq-kI} does not show any singularity for small $\varepsilon$,
i.e., for any $\varepsilon_0>0$ there exists $C>0$ such that
\[
\Big|\dfrac{n(x+i\varepsilon)}{\lambda-m(x+i\varepsilon)}-\dfrac{n(x-i\varepsilon)}{\lambda-m(x-i\varepsilon)}\Big| \le C
\]
for all $x\in P$, $\lambda\in S_T$ and $0<\varepsilon<\varepsilon_0$, and
\[
|k_P(\lambda,\varepsilon)|\le C|P| \text{ for all $\lambda\in S_T$ and $0<\varepsilon<\varepsilon_0$.}
\]
Futhermore, the Lebesgue dominated convergence and the equality
\[
\lim_{\varepsilon\to 0+}\dfrac{n(x+i\varepsilon)}{\lambda-m(x+i\varepsilon)}
=\lim_{\varepsilon\to 0+}\dfrac{n(x-i\varepsilon)}{\lambda-m(x-i\varepsilon)}
=\dfrac{n(x)}{\lambda-m(x)}
\]
implies $\lim_{\varepsilon\to 0+} k_P(\lambda,\varepsilon)=0 \text{ for all $\lambda\in S_T$}$.

To analyze the second term $k_L$, we remark that, by construction, $L$ is a closed interval and
$m'(x)\ne 0$ for $x\in L$, hence Lemma \ref{lem0} is applicable.
\end{proof}

\subsection{Spectral measures and proof of Theorem \ref{th-main}}

{}From now on we introduce the operator
\[
\widetilde T:=T_{m(J)}
\]
and the orthogonal projector
\[
P:\cG\to\widetilde \cG:=\ran E_T\big(m(J)\big)
\]
Recall that we consider $\widetilde T$ as a self-adjoint operator in $\widetilde\cG$.

\begin{prop}\label{prop2} 
Let $\mu$ be the inverse function to $K\ni x\mapsto m(x)\in m(K)\equiv m(J)$, then
the operator $n\big(\mu(\widetilde T)\big) \mu'(\widetilde T)$ is bounded, and
for any bounded Borel set $B\subset J$ there holds
\begin{gather}
 \Sigma_N(B)= P^* n\big(\mu(\widetilde T)\big) \mu'(\widetilde T) E_{\widetilde T}\big(m(B)\big)P, \label{eq-s0exp} \\
 \Sigma^0_N(B)=P^*n\big(\mu(\widetilde T)\big) \mu'(\widetilde T)\big(1+\mu(\widetilde T)^2\big)^{-1}E_{\widetilde T}\big(m(B)\big)P.
 \label{eq-sexp} 
\end{gather} 
\end{prop}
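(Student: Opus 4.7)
The plan is to use the Stieltjes inversion formula to express $\Sigma_N((a,b))$ for open intervals $(a,b)\subset J$ as a double strong limit, apply Proposition \ref{prop-conv} to evaluate it in terms of the spectral measure of $T$, conjugate by the projector $P$ onto $\widetilde\cG$, and obtain the formula for $\Sigma^0_N$ from the relation $\Sigma^0_N(dt)=(1+t^2)^{-1}\Sigma_N(dt)$ via a change of variable.

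First, I would check that $n(\mu(\widetilde T))\mu'(\widetilde T)$ is bounded. By Lemmas \ref{prop-pm} and \ref{lem4}, $K$ is an interval on which $m'$ is continuous and non-vanishing, so $m|_K\colon K\to m(K)$ is a bijection with inverse $\mu$ and $\mu'(\lambda)=1/m'(\mu(\lambda))$. Working on the bounded open neighbourhood $\Omega$ of $K$ where $m'\ne 0$ (already used in the proof of Proposition \ref{prop-conv}), both $\mu'$ and $n\circ\mu$ extend continuously to a neighbourhood of $m(K)$ and are therefore bounded on the compact set $\spec \widetilde T\subset\overline{m(J)}\cap S_T$, so the corresponding functional-calculus operators on $\widetilde\cG$ are bounded.

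Second, I would compute $\Sigma_N((a,b))$ for $(a,b)\subset J$. Setting $I_\delta=[a+\delta,b-\delta]$, the Stieltjes inversion formula, the identity \eqref{eq-m0} and the spectral theorem for $T$ rewrite the inner operator integral as $\int_{S_T} k_{I_\delta}(\lambda,\varepsilon)\,E_T(d\lambda)$. Proposition \ref{prop-conv} furnishes the uniform bound \eqref{eq-kbd} and the pointwise limit \eqref{eq-klim} as $\varepsilon\to 0+$. The dominated convergence theorem applied to functional calculus (a uniformly bounded, pointwise convergent family of Borel functions converges strongly under any spectral integral) then gives the $\varepsilon$-limit as an honest spectral integral, and letting $\delta\to 0+$ yields
\[
\Sigma_N((a,b))=\int_{m((a,b))} \mu'(\lambda)\,n(\mu(\lambda))\,E_T(d\lambda).
\]
The potentially delicate point is the pair of endpoints $\{m(a),m(b)\}$, where the limit of $k_{I_\delta}$ is half the full value; but these are exactly the endpoints excluded by the open-interval Stieltjes formula on the left, so possible spectral atoms of $T$ at $m(a)$ or $m(b)$ are consistently handled by the ``open interval'' convention on both sides.

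Third, since $m((a,b))\subset m(J)$ and $P=E_T(m(J))$ is the orthogonal projector onto $\widetilde\cG$, one has $E_T(B)=P^*E_{\widetilde T}(B)P$ for every Borel $B\subset m(J)$, and the functional calculus for $\widetilde T$ rewrites the above integral as
\[
\Sigma_N((a,b))=P^*\,n(\mu(\widetilde T))\,\mu'(\widetilde T)\,E_{\widetilde T}(m((a,b)))\,P.
\]
$\sigma$-additivity of both sides in the argument then extends this identity to arbitrary bounded Borel $B\subset J$, giving \eqref{eq-s0exp}. Finally, \eqref{eq-sexp} follows from $\Sigma^0_N(dt)=(1+t^2)^{-1}\Sigma_N(dt)$ together with the change of variable $\lambda=m(t)$, under which the weight $(1+t^2)^{-1}$ becomes $(1+\mu(\lambda)^2)^{-1}$ inside the spectral integral for $\widetilde T$.

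The main obstacle is the justification of the double strong limit at step two, in particular the passage $\slim_{\varepsilon\to 0+}$ inside the $E_T$-integral and the subsequent $\delta\to 0+$ limit with its careful handling of $\{m(a),m(b)\}$; once Proposition \ref{prop-conv} is combined with the spectral-measure version of dominated convergence, the remaining work is routine functional calculus plus $\sigma$-additivity.
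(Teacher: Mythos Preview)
Your proposal follows the paper's route closely---Stieltjes inversion, Proposition~\ref{prop-conv} combined with dominated convergence for the $\varepsilon$-limit, then the $\delta$-limit, conjugation by $P$, and the change of variable for $\Sigma^0_N$. The one structural difference is how you establish boundedness of $n(\mu(\widetilde T))\mu'(\widetilde T)$. You argue it \emph{a priori}, invoking the open set $\Omega\supset K$ from the proof of Proposition~\ref{prop-conv} and continuity on the compact $\spec\widetilde T$. The paper instead proceeds in two stages and obtains boundedness \emph{a posteriori}: first it treats only intervals with $[a,b]\subset J$, where $\widetilde f:=n(\mu(\cdot))\mu'(\cdot)$ is automatically bounded on the compact image $m([a,b])$; then, for a general open $(a,b)\subset J$, it approximates by $B_n=(a+\tfrac1n,b-\tfrac1n)$ and uses that $\Sigma_N(B_n)\to\Sigma_N(B)$ strongly (since $\Sigma_N$ is an operator-valued measure) to conclude that $\widetilde f(T)E_T(m(B))$, and hence $\widetilde f(\widetilde T)$, is bounded. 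The paper explicitly flags that in this general case ``the boundedness of $\widetilde f$ on $m(B)$ is a priori not guaranteed'', so it deliberately avoids your shortcut. Your route is shorter and is legitimate provided $\bar K\subset\Omega$ (so that $\mu'$ extends continuously past $\spec\widetilde T$); the paper's measure-theoretic route buys independence from that assertion at this point.

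One small imprecision in your $\delta\to0+$ step: the half-values from \eqref{eq-klim} sit at $m(a+\delta)$ and $m(b-\delta)$, not at $m(a),m(b)$, so the ``open-interval convention'' remark does not quite explain their fate. The clean way to pass to the limit (which is what the paper does implicitly) is to rewrite the post-$\varepsilon$ expression as
\[
\tfrac12\,\widetilde f(T)\Bigl[E_T\bigl(m([a+\delta,b-\delta])\bigr)+E_T\bigl(m((a+\delta,b-\delta))\bigr)\Bigr]
\]
and observe that both spectral projectors increase strongly to $E_T\bigl(m((a,b))\bigr)$ as $\delta\to0+$.
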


\begin{proof}
By the $\sigma$-additivity it is sufficient to consider open intervals $B=(a,b)$.

(a) Assume first $\Bar B=[a,b]\subset J$. Applying \eqref{eq-kbd} and
the Fubini theorem to the expression \eqref{eq-sigma0} for $\Sigma_0$
one obtains
\[
\Sigma_N(B)=\slim_{\delta\to 0+}\slim_{\varepsilon\to 0+} \int_{S_T} k_{[a+\delta,b-\delta]}(\lambda,\varepsilon)\, E_T(d\lambda).
\] 
Take any $h\in \cH$. Using again \eqref{eq-kbd} and the Lebesgue dominated convergence one obtains,
by virtue of \eqref{eq-klim},
\begin{multline}
\slim_{\varepsilon\to 0+} \int_{S_T} k_{[a+\delta,b-\delta]}(\lambda,\varepsilon)\, dE_T(\lambda)h\\
= \int_{S_T} \slim_{\varepsilon\to 0+} k_{[a+\delta,b-\delta]}(\lambda,\varepsilon)\, dE_T(\lambda)h
=\widetilde f (T) E_T\Big(m \big((a+\delta,b-\delta)\big)\Big)h\\
+\dfrac{1}{2}\,\Big[\widetilde f\big(m(a+\delta)\big)E_T\big(\{m (a+\delta)\}\big)
+\widetilde{f}\big(m(b-\delta)\big)E_T\big(\{m (b-\delta)\}\big)
\Big]\,h
       \label{eq-limp}
\end{multline}
where
\[
\widetilde f(x)=\begin{cases}
n\big(\mu(x)\big)\mu'(x), & \text{for } x\in S_T\cap m(J),\\
0, & \text{otherwise}.
\end{cases}
\]
Hence, noting that the function $\widetilde f$ is a priori bounded on $m(B)$ and
passing to the limit as $\delta\to 0+$ we obtain 
\begin{equation}
          \label{eq-snb}
\Sigma_N(B):=\widetilde f(T) E_T\big(m(B)\big).
\end{equation}
On the other hand, there holds
\[
E_T\big(m(B)\big)=P^* E_{\widetilde T}\big(m(B)\big) P, \quad
\widetilde f(T):= P^* n\big(\mu(\widetilde T)\big)\mu'(\widetilde T)P,
\quad PP^*=\Id_{\widetilde G},
\]
which transforms \eqref{eq-snb} into \eqref{eq-s0exp}.

(b) Let $B=(a,b)\subset J$ be an arbitrary open interval. In this case the boundedness
of $\widetilde f$ on $m(B)$ is a priori not guaranteed, hence one can have troubles when passing to the limit
in \eqref{eq-limp}.
To deal with this case consider the sequence $B_n=(a+1/n,b-1/n)$. One has obvously $\Bar B_n\subset J$,
hence for any $h\in \dom L$, $L=\widetilde f(T)$, we have
\[
\lim_{n\to +\infty}E_T\big(m(B_n)\big) Lh=
E_T\big(m(B)\big)Lh.
\]
On the other hand, by (a), one has
\[
\slim_{n\to +\infty} L E_T\big(m(B_n)\big)=\slim_{n\to+\infty} \Sigma_N(B_n)=\Sigma_N(B).
\]
Therefore, for all $h\in\dom L$ we have
$LE_T\big(m(B)\big)h=\Sigma_N(B)h$, which is extended by continuity to all $h\in \cH$
and shows the boundedness of $L$.

(c) We have
\begin{multline*}
\Sigma^0_N(B)=\int_B \dfrac{\Sigma_N(dt)}{1+t^2}=
P^*\int_B \dfrac{n\big(\mu(\widetilde T)\big) \mu'(\widetilde T)E_{\widetilde T}\big(m(dt)\big)}{1+t^2}\, P\\
=P^* n\big(\mu(\widetilde T)\big) \mu'(\widetilde T) \int_{m(B)} \dfrac{E_{\widetilde T}(dy)}{1+\mu(y)^2}\, P\\
=P^* n\big(\mu(\widetilde T)\big) \mu'(\widetilde T)\big(1+\mu(\widetilde T)^2\big)^{-1}E_{\widetilde T}\big(m(B)\big)P.
\qedhere
\end{multline*}

\end{proof}

Now we are in position to conclude the proof of the main result.

\begin{proof}[Proof of Theorem \ref{th-main}]
 Recall that we have $R=\mu(\widetilde T)$,
and, therefore, $\widetilde T=m(R)$. Note first that the assertion (a) holds with $K$ defined in \eqref{eq-stk};
it satisfies the requested conditions due to Lemmas \ref{lem4} and \ref{lem3a}.

To proceed with the assertion (b), let us prove first the equality
\begin{equation}
         \label{eq-SNB0}
\Sigma_N(B)=P^* n(R)\big(m'(R)\big)^{-1}E_R\big(B\big)\, P^*
\text{ for all Borel sets $B\subset J$.}
\end{equation}

By the $\sigma$-additivity and the regularity arguments used in the proof of Proposition \ref{prop2}
it is sufficient to study the case when $B$ is an open interval such that $\Bar B\subset J$.
We have $E_{\widetilde T}\big(m(B)\big)=E_{m(R)}\big(m(B)\big)=E_R(B)$.
Substituting this equality in \eqref{eq-s0exp} and using
the identity $\mu'(x)=\big[m'\big(\mu(x)\big)\big]^{-1}$, we obtain
the requested equality \eqref{eq-SNB0}.
Analogously, from \eqref{eq-sexp} we deduce for $B\in \cB(\RR$), $B\subset J$,
\begin{equation}
         \label{eq-SNB}
\Sigma^0_N(B)=P^*n(R)\big(m'(R)\big)^{-1}(1+R^2)^{-1}E_R(B)P.
\end{equation}

Now consider the operator-valued measure $B\mapsto \Sigma^0_{N,J}(B):=\Sigma^0_{N}(B\cap J)$ on $\cG$.
One can rewrite \eqref{eq-SNB} as
\[
\Sigma^0_{N,J}(B)=D^* E_R (B) D,
\]
where 
\[
D=\Big[n(R)m'(R)^{-1}(1+R^2)^{-1}\Big]^{1/2}P.
\]
Note that the operator $n(R)m'(R)^{-1}$ is positive due to Lemma \ref{lem3a}, hence
$\ker D^*=0$ and $\overline{\ran D}=\widetilde \cG$.
Therefore, $\Sigma^0_{N,J}$ is a minimal dilation of the orthogonal measure $E_{R,J}$,
and the operators $H_J$ and $R$ are unitarily equivalent by Proposition \ref{prop-main}.
Theorem \ref{th-main} is proved.
\end{proof}

\section{Graph-like structures}\label{sec-gr}

In this section we are going to discuss a class of examples in which Weyl functions
of the form \eqref{eq-special} appear. We are interested in the case
$n\ne\text{const}$; examples with $n=\text{const}$ can be found e.g. in \cite[Section 4]{ABMN05}
or \cite[Subsection 1.4.4]{BGP08}. We introduce first a rather general abstract
construction and then discuss its realizations by quantum graphs.

\subsection{Gluing along graphs}\label{ss31}

A part of the constructions of this subsection already appeared in \cite{P08,P09}.
Let $G$ be a graph as in the introduction. For $v\in \cV$ we denote $E_v^\iota:=\{e\in\cE:\,\iota e=v\}\subset \cE$
and $E_v^\tau:=\{e\in\cE:\,\tau e=v\}\subset \cE$ and denote
by $E_v$ the \emph{disjoint} union of these two sets,
$E_v:=E_v^\iota\sqcup E_v^\tau$.

Let now $\cK$ be a Hilbert space and $L$ be a closed densely defined
symmetric operator in $\cK$ with the deficiency indices $(2,2)$.
Consider a boundary triple $(\CC^2,\pi,\pi')$ for $L$,
\[
\pi f =\begin{pmatrix}
\pi_\iota f \\ \pi_\tau f
\end{pmatrix},
\quad
\pi' f =\begin{pmatrix}
\pi'_\iota f \\ \pi'_\tau f
\end{pmatrix},
\]
and let $L^0$ be the restriction of $L^*$ to $\ker \pi$. Denote by $\gamma(z)$ the associated $\gamma$-field
and by $m(z)$ the corresponding Weyl function, which is in this case just
a $2\times 2$ matrix function,  
\[
m(z)=\begin{pmatrix}
m_{\iota\iota}(z) & m_{\iota\tau}(z)\\
m_{\tau\iota}(z) & m_{\tau\tau}(z)
\end{pmatrix}.
\] 

We are going to interpret the operator $L$ and its boundary triple
as a description of an object having two ends, $\iota$ and $\tau$, e.g. $\Gamma_\iota f$ and $\Gamma'_\iota f$
are interpreted as the boundary values of $f$ at $\tau$. Our aim is to replace each edge
of $G$ by a copy of this object and glue these copies together by
by suitable boundary conditions at the vertices. To make this construction more evident
and to provide it with a geometric interpretation let us consider two examples.

\begin{example}\label{ex-main}
Our main example is a Sturm-Liouville operator, see \cite[Section 4]{KP06} for the details of the construction.
 Let $l>0$
and let $V\in L^2(0,l)$ be a real-valued potential.
Consider the operator
\[
L:=-\dfrac{d^2}{dx^2}+V
\]
with the domain $H^2_0(0,l)=\{f\in H^2(0,l):\, f(0)=f(l)=f'(0)=f'(l)=0\}$.
Its adjoint $L^*$ is given by the same differential expression on the domain $H^2(0,l)$,
and as a boundary triple one can take
\begin{equation}
       \label{eq-btsl}
\pi f =\begin{pmatrix}
f(0) \\ f (l)
\end{pmatrix},
\quad
\pi'(f):=\begin{pmatrix}
f(0) \\ -f'(l)
\end{pmatrix}.
\end{equation}
The associated $\gamma$-field is given by
\[
\gamma(z)\begin{pmatrix} \xi_\iota \\ \xi_\tau
\end{pmatrix}
=\frac{\xi_e-\xi_\tau c(1;z)}{s(1;z)}\,s(x;z)+\xi_e c(x;z)
\]
and the Weyl function is
\begin{equation}
          \label{eq-msl}
m(z)=
\frac{1}{s(l;z)}\,\begin{pmatrix}
-c(l;z) & 1\\
1 & -s'(l;z)
\end{pmatrix},
\end{equation}
where $s$ and $c$ are the solutions of the differential equation $-y''(t)+V(t) y(t) =zy(t)$
satisying the boundary conditions $s(0;z)=c'(0;z)=0$ and $s'(0;z)=c(0;z)=1$.
Note that the associated operator $L^0$ is just the above Sturm-Liouville operator
with the Dirichlet boundary conditions at $0$ and $l$. Its spectrum $\sigma_D$ consists
of simple eigenvalues $\nu_n$, $n\in\NN$, $\nu_{n+1}>\nu_n$,
which are the zeros of the function $\nu\mapsto s(l;\nu)$.
\qed
\end{example}

\begin{example}\label{ex-lapl}
Let $L^0$ be the Laplace-Beltrami operator on a closed manifold $M$, $2\le \dim M\le 3$.
Take two points $x_1,x_2\in M$ and denote by $L$ the restriction of $L^0$
to the functions $f\in\dom L^0$ with $f(x_1)=f(x_2)=0$. Then $L$ is a closed symmetric operator
with deficiency indices $(2,2)$, and one can construct an associated boundary triple
and the Weyl function as follows, see \cite[Section 1.4.3]{BGP08}.
Let
\[
F(x,y)=\begin{cases}
\dfrac{1}{2\pi}\log\dfrac{1}{d(x,y)}, & \dim M=2,\\
\dfrac{1}{4\pi d(x,y)}, & \dim M=3,
\end{cases}
\]
where $d(x,y)$ is the geodesic distance between $x,y\in M$.
Any function $f\in \dom L^*$ has the asymptotic behavior
\[
f(x)= a_j(f) F(x,x_j) +b_j(f) +o(1),\quad x\to x_j, \quad a_j(f),b_j(f)\in\CC,\quad j=1,2,
\]
hence as a boundary triple one can take $(\CC^2,\Gamma,\Gamma')$
with
\[
\Gamma f=\begin{pmatrix}
a_1(f)\\ a_2(f)
\end{pmatrix},
\quad
\Gamma' f=\begin{pmatrix}
b_1(f)\\ b_2(f)
\end{pmatrix},.
\]
Note that the original operator $L^0$ is just the restriction
of $L^*$ to $\ker\Gamma$, and its spectrum is discrete.
The Weyl function $m$ for the above boundary triple has the form
\[
m(z)=\begin{pmatrix}
G^r(x_1,x_1;z) & G(x_1,x_2;z)\\
G(x_2,x_1;z) & G^r(x_2,x_2;z)
\end{pmatrix},
\]
where $G$ is the Green function of $L^0$, i.e. the integral kernel of the resolvent
$(L^0-z)^{-1}$, and $G^r$ is the regularized Green function,
defined as the difference $G^r(x,y;z):=G(x,y;z)-F(x,y)$
and extended to the diagonal $x=y$ by continuity.\qed
\end{example}

To introduce rigorously the gluing of copies of $L$ along the edges of $G$,
let us consider the Hilbert space $\cH:=\bigoplus_{e\in\cE} \cH_e$, $\cH_e= \cK$, and the symmetric operator
$S=\oplus_{e\in\cE} L_e$, $L_e= L$. Clearly, $S$ is closed densely defined in $\cH$, has equal deficiency indices,
and $S^*=\bigoplus_{e\in\cE} L_e^*$.
As a boundary triple for $S$ one can take $(\widetilde \cG,\widetilde \Gamma, \widetilde \Gamma')$
with
\[
\widetilde\cG:=\bigoplus_{e\in\cE} \CC^2, \quad
\widetilde \Gamma (f_e) = (\pi f_e), \quad
\widetilde \Gamma' (f_e) = (\pi' f_e).
\]
This construction does not take into account
the combinatorial structure of the graph $G$, and we prefer to modify it
by regrouping all the components with respect to the vertices.
More precisely, for any $v\in\cV$ denote $\cG_v:=\CC^{\deg v}$ and set $\cG:=\bigoplus_{v\in\cV} \cG_v$.
For $\phi\in\cG$ we will write $\phi=(\phi_v)_{v\in\cV}$, $\phi_v=(\phi_{v,e})_{e\in E_v}\in\cG_v$,
or simply $\phi=(\phi_{v,e})$. The scalar product of $\phi,\psi\in\cG$ is hence defined as
\[
\langle\phi,\psi\rangle_\cG=\sum_{v\in\cV} \langle \phi_v,\psi_v\rangle_{\cG_v}=
\sum_{v\in\cV}\sum_{e\in E_v} \overline{\phi_{e,v}}\psi_{e,v}.
\] 
As a boundary triple for $S$ we take now $(\cG,\Gamma,\Gamma')$
with
\[
\Gamma f= (\Gamma_v f)_{v\in\cV},\, \Gamma_v f=(\Gamma_{v,e} f)_{e\in E_v},\quad
\Gamma_{v,e}=\begin{cases}
\pi_\iota f_e & \text{if } v=\iota e,\\
\pi_\tau  f_e & \text{if } v=\tau e,
\end{cases}
\]
and $\Gamma'$ is defined analogously. Let us calculate the Weyl function
for this boundary triple. Let $\xi=(\xi_{v,e})\in\cG$ and $z\notin\spec L^0$.
The function $f\in\ker(S^*-z)$ with $\Gamma f= \xi$ has the form
$f=(f_e)$, 
\[
f_e=\gamma(z) \begin{pmatrix}
\xi_{\iota e,e}\\ \xi_{\tau e,e}
\end{pmatrix}, 
\quad
\begin{pmatrix}
\Gamma'_{\iota e,e} f\\
\Gamma'_{\tau e,e} f
\end{pmatrix}
=\pi' \gamma(z)\begin{pmatrix}
\xi_{\iota e,e}\\ \xi_{\tau e,e}
\end{pmatrix}= m(z)\begin{pmatrix}
\xi_{\iota e,e}\\ \xi_{\tau e,e}
\end{pmatrix}.
\]
Therefore,
\begin{equation}
       \label{eq-mzz}
\big(M(z) \xi\big)_{v,e}=\Gamma'_{v,e}f=\begin{cases}
m_{\iota\iota}(z) \xi_{v,e} + m_{\iota\tau}(z) \xi_{v_e,e}, & \text{if } v=\iota e,\\
m_{\tau\tau}(z) \xi_{v,e} + m_{\tau\iota}(z) \xi_{v_e,e}, & \text{if } v=\tau e,
\end{cases} 
\end{equation}
where
\[
v_e=\begin{cases}
\tau e & \text{for } v=\iota e,\\
\iota e & \text{for } v=\tau e.
\end{cases}
\]
Note that if the symmetry conditions
\begin{equation}
   \label{eq-sym}
m_{\iota\iota}(z)=m_{\tau\tau}(z) \text{ and } m_{\iota\tau}(z)=m_{\tau\iota}(z)
\end{equation}
are satified, then the above expression for $M(z)$ can be simplified to
\begin{equation}
   \label{eq-mzsym}
   M(z) = m_{\iota\iota}(z)\Id+  m_{\iota\tau}(z)\,D,
\end{equation} 
where $D$ is the self-adjoint operator in $\cG$ acting as
\[
\big(D \xi\big)_{v,e}= \xi_{v_e,e}.
\]
The restriction $H^0$ of $S^*$ to $\ker \Gamma$ is just the direct sum
of the copies of $L^0$,
\[
H^0=\bigoplus_{e\in\cE} L^0,
\]
hence $\spec H^0=\spec L^0$ and any spectral gap of $L^0$
is also a spectral gap for $H^0$.

Now impose gluing boundary conditions at each vertex $v\in\cV$ by
\begin{equation}
       \label{eq-bcab}
A_v \Gamma_v f=B_v\Gamma'_v f
\end{equation}
where $A_v$, $B_v$ are ${\deg v}\times{\deg v}$ matrices such that $A_v B_v^*=B_v A_v^*$ and $\det(A_vA_v^*+B_v B_v^*)>0$.
One can rewrite these conditions in the equivalent normalized form
\begin{equation}
      \label{eq-bcu}
(1-U_v)\Gamma_v =i(1+U_v)\Gamma'_v f, \quad U_v\in \cU(\deg v)
\end{equation}
or
\begin{equation}
      \label{eq-bcpc}
P_v \Gamma'_v f = C_v P \Gamma_v f, \quad (1-P_v)\Gamma_v f=0,
\end{equation}
where $P_v$ is the orthogonal projector from $\CC^{\deg v}$ to 
\[
\cL_v:=\ker(1+U_v)^\perp
\]
and $C_v$ is a self-adjoint operator in $\cL_v$ defined as
\[
C_v=-i (1-P_v U_v P_v^*)(1+P_vU_vP_v^*)^{-1}.
\]
The equivalent boundary conditions \eqref{eq-bcab}, \eqref{eq-bcu}, \eqref{eq-bcpc} define
a self-adjoint operator, see e.g. \cite[Section 1]{BGP08}, and we denote this operator by $H$.
Note that in general $H$ is not disjoint with $H^0$ as one has $\dom H\cap \dom H^0=\ker P\Gamma'\cap\ker\Gamma\ne \dom S$,
$P:=\bigoplus_{v\in\cV} P_v$, so let us proceed as in \cite[Theorem 1.32]{BGP07}.

Denote by $\widetilde S$ the restriction of $S^*$ to
$\ker P\Gamma'\cap\ker\Gamma$, then $\widetilde S^*$ is the restriction of $S^*$ to $\ker(1-P)\Gamma$,
and as a boundary triple for $\widetilde S$
one can take $(\cG_P,\Gamma_P,\Gamma'_P)$ defined
by 
\[
\cG_P=\ran P=\bigoplus_{v\in\cV} \cL_v,
\quad
\Gamma_P=P\Gamma P^*,
\quad
\Gamma'_P:=P\Gamma'P^*
\]
($\cG_P$ is considered with the scalar product induced by the inclusion $\cG_P\subset\cG$),
and the associated Weyl function $M_P$ takes the form
\[
M_P(z):=P M(z)P^*.
\]
Now $H$ becomes the restriction
of $\widetilde S^*$ to the vectors $f$ satisfying
\[
\Gamma'_P f:= C \Gamma_P f, \quad C:= \bigoplus_{v\in\cV}C_v,
\]
and the operator $H^0$ is still the restriction of $\widetilde S^*$ to $\ker\Gamma_P$.
The following theorem shows that the spectral analysis of $H$ can be reduced in certain cases
to the spectral analysis of the discrete operator $D_P$ on $\cG_P$,
\[
D_P:=P D P^*.
\]

\begin{theorem}\label{th-red}
Assume that the symmetry conditions \eqref{eq-sym} hold and that there is $\theta \in\CC$,
such that $|\theta|=1$, $\theta\ne -1$, and
\begin{equation}
   \label{eq-scal}
\bigcup_{v\in\cV} \spec U_v \setminus \{-1\}=\{\theta\},
\end{equation}
Set
\[
\alpha:=-\frac{i\,(1-\theta)}{1+\theta}, \quad
\eta_\alpha(z):= \frac{\alpha -m_{\iota\iota}(z)}{m_{\iota\tau}(z)}.
\]
Assume now that there exists an interval $J\subset\RR\setminus\spec L^0$ such that
$m_{\iota\tau}(z)\ne 0$ for $z\in J$.
Then the operators $H_J$ and $\eta_\alpha^{-1}\big((D_P)_{\eta_\alpha(J)}\big)$
are unitarily equivalent.
\end{theorem}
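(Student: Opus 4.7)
My plan is to reduce Theorem \ref{th-red} to Theorem \ref{th-main} applied to the reduced boundary triple $(\cG_P,\Gamma_P,\Gamma'_P)$ of $\widetilde S$, after an elementary shift that absorbs the boundary parameter into the second boundary map.

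First I would unpack the boundary condition. Hypothesis \eqref{eq-scal} forces $P_v U_v P_v^* = \theta\Id_{\cL_v}$, because the only eigenvalue of $U_v$ distinct from $-1$ that can occur is $\theta$, and $P_v$ kills the $-1$-eigenspace. The defining formula for $C_v$ then gives $C_v = -i(1-\theta)(1+\theta)^{-1}\Id_{\cL_v} = \alpha\Id_{\cL_v}$, and writing $\theta=e^{i\phi}$ with $\phi\ne\pi$ one checks $\alpha = -\tan(\phi/2)\in\RR$. Consequently $C = \bigoplus_{v}C_v = \alpha\Id_{\cG_P}$, and $H$ is the restriction of $\widetilde S^*$ to $\{f:\Gamma'_P f = \alpha\Gamma_P f\}$.

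Next, combining \eqref{eq-sym} with \eqref{eq-mzsym}, the Weyl function of the full triple is $M(z) = m_{\iota\iota}(z)\Id + m_{\iota\tau}(z)D$, so the reduced Weyl function is $M_P(z) = P M(z) P^* = m_{\iota\iota}(z)\Id_{\cG_P} + m_{\iota\tau}(z) D_P$. I would then replace the reduced triple by $(\cG_P,\Gamma_P,\Gamma'_P - \alpha\Gamma_P)$; since $\alpha\in\RR$, Green's identity and surjectivity are trivially preserved and the $\gamma$-field is unchanged, so the new Weyl function is
\[
\widetilde M(z) \;=\; M_P(z) - \alpha\Id \;=\; -m_{\iota\tau}(z)\bigl[\eta_\alpha(z)\Id - D_P\bigr],
\]
which is precisely the form \eqref{eq-special} with $T=D_P$, $m(z)=\eta_\alpha(z)$, and $n(z)=-1/m_{\iota\tau}(z)$. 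Under this modified triple, $H$ becomes the restriction of $\widetilde S^*$ to the kernel of the second boundary map.

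Finally I would verify the hypotheses of Theorem \ref{th-main} on $J$: the operator $D_P$ is bounded self-adjoint as a compression of $D$; the entries $m_{\iota\iota}$, $m_{\iota\tau}$ are holomorphic on $\CC\setminus\spec L^0$ and real-valued on $\RR\setminus\spec L^0$ by the Herglotz symmetry $m(\bar z)=m(z)^*$; and $m_{\iota\tau}\ne 0$ on $J$ by assumption. Hence $\eta_\alpha$ and $n$ are holomorphic, real-valued, and $n$ is nonzero in a complex neighborhood of $J$. Invoking Theorem \ref{th-main}(b) with these choices yields $H_J \cong \mu\bigl(T_{m(J)}\bigr) = \eta_\alpha^{-1}\bigl((D_P)_{\eta_\alpha(J)}\bigr)$, which is the claim.

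The main obstacle I anticipate is the boundary-triple bookkeeping: keeping three levels clearly separated (the full triple on $\dom S^*$, the reduced triple on $\dom\widetilde S^*$, and the $\alpha$-shifted version) and confirming that the operator $H$ originally defined by the gluing conditions \eqref{eq-bcab} coincides with the restriction of $\widetilde S^*$ to $\ker(\Gamma'_P-\alpha\Gamma_P)$. A secondary point is justifying that the holomorphy hypothesis in Theorem \ref{th-main}, stated as ``outside $\spec H^0$'', is genuinely needed only on a complex neighborhood of $J$ — which is fine in our setting even if $m_{\iota\tau}$ happens to vanish elsewhere in $\CC\setminus\spec L^0$.
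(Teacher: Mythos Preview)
Your proposal is correct and follows essentially the same route as the paper: reduce to the triple $(\cG_P,\Gamma_P,\Gamma'_P)$, observe that $C=\alpha\Id$ under \eqref{eq-scal}, pass to the shifted triple $(\cG_P,\Gamma_P,\Gamma'_P-\alpha\Gamma_P)$ so that $H$ sits on the kernel of the second map, identify the shifted Weyl function with \eqref{eq-special} via $T=D_P$, $m=\eta_\alpha$, $n=-m_{\iota\tau}^{-1}$, and invoke Theorem~\ref{th-main}. Your write-up is in fact more explicit than the paper's on why $C_v=\alpha\Id$ and why $\alpha\in\RR$, and your closing remark about holomorphy of $n$ possibly failing at zeros of $m_{\iota\tau}$ away from $J$ is a legitimate technical point the paper passes over silently; since the proof of Theorem~\ref{th-main} only uses analyticity in a neighborhood of $J$, this causes no trouble.
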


\begin{proof}
Let us show that the assumptions of Theorem \ref{th-main} are satisfied.
First of all, as mentioned above, due to \eqref{eq-sym} and  \eqref{eq-mzsym}
one has $M_P(z):=m_{\iota\iota}(z)\Id_P +m_{\iota\tau}(z) D_P$.
On the other hand, under the assumption \eqref{eq-scal} all the operators $C_v$ are just
the multiplications by $\alpha$,
hence $H$ is the restriction of $\widetilde S^*$ to $\ker(\Gamma'_P-\alpha\Gamma_P)$.
Now introduce another boundary triple $(\cG_P,\Gamma_{P,\alpha},\Gamma'_{P,\alpha})$ for $\widetilde S$
by $\Gamma_{P,\alpha}=\Gamma_P$ and $\Gamma'_{P,\alpha}=\Gamma'_P-\alpha \Gamma_P$.
The associated Weyl function is
\[
M_{P,\alpha}(z)=M_P(z)-\alpha \Id= \big(m_{\iota\iota}(z) -\alpha\big)\Id + m_{\iota\tau} D_P=
\dfrac{\eta_\alpha(z)\,\Id-D_P}{-m_{\iota\tau}(z)^{-1}}.
\]
As $H=\widetilde S^*_{\ker\Gamma'_{P,\alpha}}$, the result
follows from Theorem \ref{th-main}.
\end{proof}

In the example \ref{ex-main}, the symmetry conditions \eqref{eq-sym} are satisfied if the potential $V$
is symmetric, i.e. if $V(x)\equiv V(l-x)$, cf. \cite[Section 4]{KP06}. In the example \ref{ex-lapl}
these conditions hold, e.g. if there exists an isometry $g$ of $M$ such that $g(x_1)=x_2$. If $M$ is a two-dimensional
sphere, then the condition \eqref{eq-sym} holds for arbitrary $x_1$ and $x_2$;
we refer to the paper~\cite{BEG} studying various systems of coupled spheres.
Note also that the operator $D_P$ can be viewed as a generalized laplacian
on the graph $G$, see \cite{P08,P09}. We will also see below that the adjacency
operator \eqref{eq-delta} is a particular case of $D_P$ for a suitable projector $P$.

\subsection{Quantum graph case}

Consider now in greater detail the constructions of subsection \ref{ss31}
for the Sturm-Liouville operator $L$ from Example \ref{ex-main}.

Let, as previously, $l>0$, $V\in L^2(0,l)$ be a real-valued potential and fix $\alpha:\cV\to \RR$. 
Denote by $H$ the self-adjoint operator acting in $\cH:=\bigoplus_{e\in\cE} L^2(0,l)$ as
$(f_e)\mapsto (-f_e''+Vf_e)$ on the functions $f=(f_e)\in \bigoplus_{e\in\cE}H^2(0,l)$
satisfying the boundary conditions
\begin{equation}
  \label{eq-fv}
\begin{gathered}
\text{the value } f_e(v)=: f(v) \text{ is the same for all } e\in E_v,\\
\sum_{e:\iota{e}=v} f'_e(v)=\alpha(v) f(v), \quad v\in\cV,
\end{gathered}
\end{equation}
where we denote
\[
f_e(v)=\begin{cases}
f_e(0) & \text{if } \iota e=v,\\
f_e(l) & \text{if } \tau e=v,
\end{cases}
\quad
f'_e(v)=\begin{cases}
f'_e(0) & \text{if } \iota e=v,\\
-f'_e(l) & \text{if } \tau e=v.
\end{cases}
\]
Recall that by $\sigma_D$ we denote the spectrum of the operator $f\mapsto -f''+Vf$
on $[0,l]$ with the Dirichlet boundary conditions.

The operator $H$ has the structure requested in subsection \ref{ss31}: it represents copies
of the same operator $L$ from Example \ref{ex-main} coupled through boundary conditions
at each vertex of the graph. One can rewrite the boundary conditions \eqref{eq-fv}
in the normalized form \eqref{eq-bcu} with 
\[
U_v=\dfrac{2}{\deg v + i \alpha(v)}\, J_{\deg v} -I_{\deg v},
\]
here $I_n$ and $J_n$ are respectively the $n\times n$ identity matrix and the $n\times n$ matrix whose all entries are $1$
\cite{CE}.
The value $-1$ is an eigenvalue of $U_v$ of multiplicity $\deg v -1$, and the orthogonal
projector $P_v$ onto $\ker(U_v+1)^\perp$ is just the orthogonal projector onto the one-dimensional space
spanned by the vector $p_v$, where $p_v$ is the vector of length $\deg v$ whose all entries are $1$, i.e.,
in the matrix form,
\[
P_v=\dfrac{1}{\deg v} J_{\deg v}Á
\]
Finally let us note that the condition \eqref{eq-scal} is satisfied if one has
\begin{equation}
        \label{eq-av}
\alpha(v)=\alpha \deg v
\end{equation}
for some $\alpha\in\RR$. Theorem \ref{th-red} applied to the case under consideration gives
\begin{theorem}\label{th2}
Assume that the potential $V$ is symmetric, $V(x)\equiv V(l-x)$, and that the condition \eqref{eq-av} holds.
Then, for any interval $J\subset\RR\setminus \sigma_D$
the operator $H_J$ is unitarily equivalent to $\eta_{\alpha}^{-1}\big(\Delta_{\eta_\alpha(J)}\big)$,
where $\Delta$ is the operator in $l^2(G)$ given by \eqref{eq-delta} and
\begin{equation}
       \label{eq-etaa}
\eta_\alpha (z) = c(l;z)+\alpha s(l;z).
\end{equation}
\end{theorem}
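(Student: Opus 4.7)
The plan is to derive Theorem \ref{th2} as a direct application of Theorem \ref{th-red} to the Sturm-Liouville quantum graph of Example \ref{ex-main}. I would verify in turn: (i) the symmetry conditions \eqref{eq-sym} on the $2\times 2$ local Weyl matrix \eqref{eq-msl}; (ii) the spectral condition \eqref{eq-scal} on the vertex unitaries $U_v$; (iii) that the resulting scalar function $\eta_\alpha$ of Theorem \ref{th-red} coincides with \eqref{eq-etaa} and that the reduced operator $D_P$ coincides, under a natural unitary identification $\cG_P\simeq l^2(G)$, with the adjacency operator $\Delta$ of \eqref{eq-delta}.

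For (i), the off-diagonal entries of \eqref{eq-msl} are both $1/s(l;z)$, so the only nontrivial identity is $c(l;z)=s'(l;z)$. Under $V(x)\equiv V(l-x)$ the reflection $x\mapsto l-x$ is a symmetry of the Sturm-Liouville ODE, so $s(l-\cdot;z)$ and $c(l-\cdot;z)$ lie in the solution space spanned by $s,c$; matching initial data at $x=0$ and then evaluating the resulting expansion of $s(l-x;z)$ at $x=l$ gives $0=s(l;z)\bigl(c(l;z)-s'(l;z)\bigr)$, and since $s(l;z)\ne 0$ off $\sigma_D$ the identity follows. Once \eqref{eq-sym} is established, $M(z)$ has the form \eqref{eq-mzsym} with
\[
m_{\iota\iota}(z)=-\frac{c(l;z)}{s(l;z)},\qquad m_{\iota\tau}(z)=\frac{1}{s(l;z)},
\]
and in particular $m_{\iota\tau}(z)\ne 0$ on all of $\RR\setminus\sigma_D$, covering the interval hypothesis of Theorem \ref{th-red} for every $J\subset\RR\setminus\sigma_D$.

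For (ii), I would diagonalize $U_v=\tfrac{2}{\deg v+i\alpha(v)}J_{\deg v}-I_{\deg v}$. The rank-one matrix $J_{\deg v}$ has eigenvalue $\deg v$ on the all-ones vector $p_v$ and eigenvalue $0$ on $p_v^\perp$, so $U_v$ has eigenvalue $-1$ on $p_v^\perp$ with multiplicity $\deg v-1$ and a simple eigenvalue $\theta_v=(\deg v-i\alpha(v))/(\deg v+i\alpha(v))$ on $\cL_v=\CC p_v$. Under \eqref{eq-av} this reduces to $\theta=(1-i\alpha)/(1+i\alpha)$, independent of $v$, satisfying $\theta\ne -1$, so \eqref{eq-scal} holds. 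The Cayley-type inversion $-i(1-\theta)/(1+\theta)$ of Theorem \ref{th-red} returns the same scalar $\alpha$, and substituting into the formula for $\eta_\alpha$ gives
\[
\eta_\alpha(z)=\bigl(\alpha+c(l;z)/s(l;z)\bigr)\,s(l;z)=c(l;z)+\alpha s(l;z),
\]
matching \eqref{eq-etaa}.

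For (iii), the projectors are $P_v=J_{\deg v}/\deg v$, so $\cG_P$ consists of vectors $\phi=(\phi_{v,e})$ that are constant in $e\in E_v$; writing $\phi_{v,e}=c(v)$ one has $\|\phi\|_\cG^2=\sum_{v}\deg v\,|c(v)|^2$, which is the $l^2(G)$ norm, defining a natural unitary $U\colon\cG_P\to l^2(G)$, $\phi\mapsto c$. Since $(D\phi)_{v,e}=\phi_{v_e,e}=c(v_e)$, averaging over $e\in E_v$ via $P_v$ gives
\[
(U D_P U^{-1}c)(v)=\frac{1}{\deg v}\sum_{e\in E_v}c(v_e)=\frac{1}{\deg v}\Bigl(\sum_{e:\iota e=v}c(\tau e)+\sum_{e:\tau e=v}c(\iota e)\Bigr),
\]
which is precisely $\Delta c$. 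Applying Theorem \ref{th-red} now yields the unitary equivalence $H_J\simeq\eta_\alpha^{-1}\bigl(\Delta_{\eta_\alpha(J)}\bigr)$. The main obstacle I expect is the bookkeeping in step (iii): checking that the weights $\deg v$ in $l^2(G)$ match those induced from the ambient $\cG$-norm, and that the averaging by $P_v$ reproduces the adjacency-operator formula of \eqref{eq-delta} correctly, in particular that self-loops ($\iota e=\tau e=v$) are counted with the right multiplicity in both sums.
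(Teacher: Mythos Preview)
Your proposal is correct and follows essentially the same route as the paper: apply Theorem~\ref{th-red} after checking the symmetry conditions \eqref{eq-sym}, the scalar condition \eqref{eq-scal}, and then identify $D_P$ with $\Delta$ via the unitary $\cG_P\simeq l^2(G)$ sending $\phi$ to its common value $c(v)$ on each $E_v$ (the paper writes the inverse map $\Theta\colon l^2(G)\to\cG_P$, $(\Theta\xi)_v=\xi(v)p_v$). You supply more computational detail than the paper, which simply cites \cite{KP06} for \eqref{eq-sym} and states the eigenstructure of $U_v$ without the explicit Cayley computation; your self-loop concern is handled correctly by the disjoint-union definition of $E_v$, which makes each self-loop contribute twice on both sides.
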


\begin{proof}
As noted above, the symmetry of the potential $V$ guarantees that the conditions \eqref{eq-sym} hold.
Theorem \ref{th-red} and the formulas \eqref{eq-msl} 
show that $H_J$ is unitarily equivalent
to $\eta_{\alpha}^{-1}\big((D_P)_{\eta_\alpha(J)}\big)$ 
On the other hand, consider the unitary transformation
\begin{equation}
     \label{eq-theta}
\Theta:l^2(G)\to \cG_P, \quad (\Theta \xi)_v= \xi(v) p_v. 
\end{equation}
Applying $D_P$ to $\Theta\xi$ we obtain
\begin{multline*}
( D_P \Theta\xi)_{v,e}=
(P D P^* \Theta \xi)_{v,e}=\dfrac{1}{\deg v} \sum_{e\in E_v} \big(D P^* \Theta\xi\big)_{v,e}\\
=\dfrac{1}{\deg v} \sum_{e\in E_v} (\Theta \xi)_{v_e,e}=\dfrac{1}{\deg v}\sum_{e\in E_v} \xi(v_e), 
\end{multline*}
i.e. $D_P \Theta=\Theta\Delta$, hence $D_P$ and $\Delta$ are unitarily equivalent.
\end{proof}

Taking in this theorem $l=1$, $V=0$ and $\alpha=0$
we obtain $\eta_0(z)=\cos\sqrt z$, which gives proposition \ref{th0}.

Let us mention several other cases where the unitary dimension reduction is possible.

\begin{theorem}\label{th3}
Let $V\in L^2(0,l)$ be arbitrary and the condition \eqref{eq-av} hold.
Assume that the ratio $\kappa:=\dfrac{\outdeg v}{\deg v}$
is the same for all $v\in\cV$. Then $H_J$ is unitarily equivalent to
$\eta_{\alpha}^{-1}\big(\Delta_{\eta_\alpha(J)}\big)$ with
$\eta_\alpha(z)= \kappa c(l;z)+(1-\kappa)s'(l;z)+\alpha s(l;z)$. 
\end{theorem}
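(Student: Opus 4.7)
The plan is to adapt the argument of Theorems \ref{th-red} and \ref{th2}. The obstruction is that without the assumption $V(x)\equiv V(l-x)$ the symmetry condition \eqref{eq-sym} typically fails: from \eqref{eq-msl} one has $m_{\iota\iota}(z)=-c(l;z)/s(l;z)$ while $m_{\tau\tau}(z)=-s'(l;z)/s(l;z)$, and these do not coincide in general. Consequently \eqref{eq-mzsym} is unavailable and Theorem \ref{th-red} cannot be invoked verbatim. The key observation is that the off-diagonal entries still agree, $m_{\iota\tau}(z)=m_{\tau\iota}(z)=1/s(l;z)$, and the asymmetry in the diagonal of $M(z)$ is absorbed after projection by $P$ exactly when the vertex ratios $\kappa_v:=\outdeg v/\deg v$ are constant in $v$.

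The first step is to compute $M_P(z)=PM(z)P^*$ directly from \eqref{eq-mzz}. Take $\xi\in\cG_P$; since each $P_v$ projects onto the span of $p_v$, one has $(P^*\xi)_{v,e}=\xi(v)$ for every $e\in E_v$. Applying $M(z)$ via \eqref{eq-mzz}, splitting the sum over $E_v$ according to whether $v=\iota e$ or $v=\tau e$, and then averaging with weight $1/\deg v$ gives
\begin{equation*}
\bigl(PM(z)P^*\xi\bigr)_{v,e}=\bigl[\kappa_v m_{\iota\iota}(z)+(1-\kappa_v)m_{\tau\tau}(z)\bigr]\xi(v)+m_{\iota\tau}(z)\,(D_P\xi)_{v,e},
\end{equation*}
where $\outdeg v$ and $\indeg v$ surface as the cardinalities of the two sub-sums. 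Under the hypothesis $\kappa_v\equiv\kappa$ the prefactor becomes independent of $v$ and we obtain
\begin{equation*}
M_P(z)=\bigl[\kappa\, m_{\iota\iota}(z)+(1-\kappa)\,m_{\tau\tau}(z)\bigr]\Id_P+m_{\iota\tau}(z)\,D_P.
\end{equation*}

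From here the remainder parallels the proof of Theorem \ref{th-red}: assumption \eqref{eq-av} forces each $C_v$ to be scalar multiplication by $\alpha$, so passing to the shifted boundary triple $(\cG_P,\Gamma_P,\Gamma'_P-\alpha\Gamma_P)$ for $\widetilde S$ replaces the Weyl function by $M_{P,\alpha}(z):=M_P(z)-\alpha\Id$. Substituting the entries of \eqref{eq-msl} and collecting terms gives
\begin{equation*}
M_{P,\alpha}(z)=\frac{\eta_\alpha(z)\,\Id_P-D_P}{-s(l;z)},\qquad \eta_\alpha(z)=\kappa c(l;z)+(1-\kappa)s'(l;z)+\alpha s(l;z),
\end{equation*}
which fits the template \eqref{eq-special} with $m=\eta_\alpha$, $T=D_P$, and $n=-s(l;\cdot)$. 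Analyticity and reality on $J$ are immediate, and $s(l;z)\ne 0$ throughout $J\subset\RR\setminus\sigma_D$, so Theorem \ref{th-main} yields the unitary equivalence of $H_J$ with $\eta_\alpha^{-1}\bigl((D_P)_{\eta_\alpha(J)}\bigr)$. A final conjugation by the intertwiner $\Theta$ from \eqref{eq-theta}, established in the proof of Theorem \ref{th2}, replaces $D_P$ by $\Delta$.

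The only genuinely new point is identifying the right structural condition on the graph. Once one writes down the averaged sum, it is apparent that $\outdeg v$ and $\indeg v$ enter as weights of $m_{\iota\iota}$ and $m_{\tau\tau}$, and the requirement $\kappa_v\equiv\kappa$ emerges as the precise hypothesis forcing the diagonal part of $M_P$ to become a scalar multiple of $\Id_P$. Everything else is bookkeeping on top of the machinery already developed.
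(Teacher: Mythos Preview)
Your proof is correct and follows essentially the same route as the paper: compute the projected Weyl function $M_P(z)$ from \eqref{eq-mzz}, observe that the constant-ratio hypothesis $\kappa_v\equiv\kappa$ makes its diagonal part scalar, subtract $\alpha\Id$ to bring it into the form \eqref{eq-special}, and then invoke Theorem \ref{th-main} together with the intertwiner $\Theta$. The only cosmetic difference is that the paper composes with $\Theta$ before carrying out the averaging (so $\Delta$ appears immediately), whereas you first obtain $D_P$ and conjugate by $\Theta$ afterwards.
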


\begin{proof}
Note that we still have $m_{\iota\tau}=m_{\tau\iota}$.
Take the same unitary transformation \eqref{eq-theta} and calculate $M_P \Theta$:
\begin{multline*}
(PM(z)P^* \Theta)\xi_{v,e}=\dfrac{1}{\deg v}\Big\{
\sum_{e:\iota e=v} \big[m_{\iota\iota}(z) (\Theta \xi)_{v,e} - m_{\iota\tau}(z) (\Theta \xi)_{v_e,e}\big]\\
+\sum_{e:\tau e=v} \big[m_{\tau\tau}(z) (\Theta \xi)_{v,e} - m_{\tau\iota}(z) (\Theta \xi)_{v_e,e}\big]
\Big\}\\
= \dfrac{1}{\deg v}\,\Big[ \big(\outdeg v  \cdot  m_{\iota\iota}(z) +\indeg v  \cdot  m_{\iota\iota}(z)\big)
\xi(v)+ m_{\iota\tau}(z)\sum_{e\in E_v} \xi(v_e)
\Big],
\end{multline*}
hence
\[
M_P(z)\Theta=\dfrac{\Theta\Delta -\big(\kappa c(l;z)+ (1-\kappa) s'(l;z)\big)\Theta}{s(l;z)},
\]
and the rest of the proof is similar to that of Theorem \ref{th-red}.
\end{proof}

One can extend the above results to the case with magnetic fields following the constructions
of \cite{KP06,P09}. Namely, let $(a_e)_{e\in\cE}$
be a family of magnetic potentials, $a_e\in C^1 \big([0,l\big)]$.
Denote by $\widetilde H$ the self-adjoint operator in $\cH:=\bigoplus_{e\in\cE} L^2(0,l)$ as
\[
(g_e)\mapsto \Big((i\partial +a_e)^2g_e''+Vg_e\Big), \quad \partial g_e:=g'_e, 
\]
on the functions $g=(g_e)\in \bigoplus_{e\in\cE}H^2(0,l)$
satisfying the magnetic analogue of the boundary conditions \eqref{eq-fv},
\begin{gather*}
\text{the value } g_e(v)=: g(v) \text{ is the same for all } e\in E_v,\\
\sum_{e:\iota{e}=v} \big[g'_e(v)-ia_e(v) g_e(v)\big]=\alpha(v) g(v), \quad v\in\cV.
\end{gather*}
Applying the unitary transformation
\[
g_e(t)=\exp\Big(\int_0^t a_e(s)ds\Big) f_e(t)
\]
and introducing the parameters
\[
\beta_e=\int_0^l a_e(s)ds
\]
one sees that $\widetilde H$ is unitarily equivalent to the operator $H$ acting as
$(f_e)\mapsto(-f_e''+V f_e)$ with the boundary conditions
\begin{gather*}
\text{the value } e^{i\beta_{v,e}}f_e(v)=: f(v) \text{ is the same for all } e\in E_v,\\
\sum_{e:\iota{e}=v} e^{i\beta_{v,e}}f'_e(v)=\alpha(v) g(v), \quad v\in\cV, \quad
\text{with } \beta_{v,e}=\begin{cases}
0 & \text{ if } v=\iota e,\\
\beta_e & \text{ if } v=\tau e.
\end{cases}
\end{gather*}
By a minor modification of the preceding constructions one can show that Theorems \ref{th2} and \ref{th3}
hold in the same form if one replaces the operator $\Delta$ by its magnetic version $\Delta_\beta$,
\[
\Delta_\beta f(v)=\frac{1}{\deg v}\,\Big(\sum_{e:\iota e=v} e^{-i\beta_e}f(\tau e)
+\sum_{e:\tau e=v} e^{i\beta_e}f(\iota e)\,\Big).
\]

Let us now comment on the dimension reduction for boundary conditions
different from \eqref{eq-fv}.

\begin{example}[$\delta'$-coupling]\label{ex-dprim}
Another popular class of boundary conditions is the so-called
$\delta'$ coupling \cite{CE},
\[
\sum_{e\in E_v} f'_e(v)=0,\quad
f_e(v)-f_b(v)=\dfrac{\beta(v)}{\deg v}\big(
f'_e(v)-f'_b(v)\big), \quad e,b\in E_v, \quad v\in\cV,
\]
where $\beta(v)$ are non-zero real constants.
These boundary conditions can be rewritten in the normalized form \eqref{eq-bcu}
with
\[
U(v)=-\dfrac{\deg v+i\beta(v)}{\deg v-i\beta(v)}\, I_{\deg v} + \dfrac{2}{\deg v-i\beta(v)}\,J_{\deg v},
\]
and the condition \eqref{eq-scal} is fulfilled if $\beta(v)=\beta \deg v$ for some $\beta\in\RR\setminus\{0\}$.
Hence for an even potential $V$ Theorem \ref{th-red} applies, and for any
interval $J\subset\RR\setminus\sigma_D$ the operator $H_J$ is unitarily equivalent to $\eta_{1/\beta}^{-1}\big((D_P)_{\eta_{1/\beta}(J)}\big)$
with $\eta_{1/\beta}$ defined by \eqref{eq-etaa} and $P=\bigoplus P_v$,
where $P_v$ is the orthogonal projector in $\CC^ {\deg v}$ onto the subspace $p_v^\perp$.
Such operator $D_P$ appeared already in \cite{MH} in a slightly different problem. \qed
\end{example}

\begin{example}[$\delta'_s$ coupling] One can also consider the so-called $\delta'_s$ coupling
given by the following boundary conditions \cite{CE}:
\begin{equation}
        \label{eq-dps}
f'_e(v)=f'_b(e)=:f'(v), \quad e,b\in E_v, \quad
\sum_{e\in E_v} f_e(v)=\alpha(v) f'(v), \quad v\in\cV.
\end{equation}
To treat this case it is better to modify the boundary triple for the initial operator
$L$: instead of \eqref{eq-btsl} one can define
\[
\pi f=\begin{pmatrix}
-f'(0)\\ f'(l)
\end{pmatrix},
\quad
\pi' f=\begin{pmatrix}
f(0)\\ f(l)
\end{pmatrix},
\]
then the associated Weyl function is
\[
m(z)=\dfrac{1}{c'(l;z)}\,\begin{pmatrix}
s'(l;z) & 1 \\ 1 & c(l;z)
\end{pmatrix}.
\]
Note that the reference operator $L^0$ is now the Neumann operator on $[0,l]$.
Denote by $\sigma_N$ its spectrum. With this new boundary triple
the boundary conditions \eqref{eq-dps} become similar to the Kirchoff boundary conditions \eqref{eq-fv};
they can rewritten in the normalized form \eqref{eq-bcu} with
\[
U_v=\dfrac{1}{\deg v - i \alpha(v)}\,J_{\deg v}- I_{\deg v}.
\]
Assuming now that $V$ is symmetric and that \eqref{eq-scal} holds and proceeding as in Theorem \ref{th2}
one can show that for any interval $J\subset\RR\setminus\sigma_N$ the operator $H_J$
is unitarily equivalent to $\eta_\alpha^{-1}\big((-\Delta)_{\eta_\alpha(J)}\big)$
with $\eta_\alpha(z)=c(l;z)+\alpha c'(l;z)$. \qed
\end{example}


\begin{thebibliography}{999}

\bibitem{A} N. I. Achieser, I. M. Glasman: \emph{Theorie der linearen Operatoren im Hilbertraum}
(Akademie-Verlag, Berlin, 1975).

\bibitem{ABMN05} S. Albeverio, J. F. Brasche, M. M. Malamud, H. Neidhardt:
\emph{Inverse spectral theory for symmetric operators with several gaps:
scalar-type Weyl functions}. J. Funct. Anal.  {\bf 228} (2005) 144--188.


\bibitem{A2} Yu. Arlinski, S. Belyi, E. Tsekanovskii: \emph{Conservative Realizations of Herglotz-Nevanlinna Functions}
(Operator Theory: Ad. Appl., vol. 217, Basel, Springer, 2011).

\bibitem{vB85} J. von Below: \emph{A characteristic equation associated to an eigenvalue problem on $c^2$-networks.}
Lin. Alg. Appl. {\bf 71} (1985) 309--325.

\bibitem{vblub} J. von Below, J. A. Lubary:
\emph{The eigenvalues of the Laplacian on locally finite networks under generalized node transition.}
Results Math. {\bf 54} (2009) 15--39.

\bibitem{QG1} G. Berkolaiko, R. Carlson, S. A. Fulling, P. Kuchment
  (Eds.): \emph{Quantum graphs and their applications} (Contemp. Math., vol. 415, AMS, 2006).

\bibitem{jfb}
J. F. Brasche: \emph{Spectral theory for self-adjoint extensions.}
In the book  R. del Rio, C. Villegas (eds.): \emph{Spectral Theory of Schr\"odinger
Operators} (Contemp. Math. vol. 340, 2004, AMS, Providence,
Rhode Island) 51--96.

\bibitem{BMN02} J. F. Brasche, M. Malamud, H. Neidhardt: \emph{Weyl functions and spectral
properties of self-adjoint extensions.} Integr. Eqs. Operator
Theory {\bf 43} (2002) 264--289.


\bibitem{BEG} J. Br\"uning, P. Exner, V. Geyler:
\emph{Large gaps in point-coupled periodic systems of manifolds.}
J. Phys. A36 (2003) 4875--4890.


\bibitem{BGP07} J. Br\"uning, V. Geyler, K. Pankrashkin:
\emph{Cantor and band spectra for periodic quantum graphs with magnetic fields.}
Commun. Math. Phys. {\bf 269} (2007) 87--105.


\bibitem{BGP08} J. Br\"uning, V. Geyler, K. Pankrashkin:
\emph{Spectra of self-adjoint extensions and applications to solvable Schr\"odinger operators}
Rev. Math. Phys. {\bf 20} (2008) 1--70. 

\bibitem{RC} R. Carlson: \emph{Hill's equation for a homogeneous tree.}
Electronic J. Differential Eqs. {\bf 1997} (1997) 1--30.


\bibitem{C97} C. Cattaneo: \emph{The spectrum of the continuous Laplacian on a graph.}
Monatsh. Math. {\bf 124} (1997) 215--235. 


\bibitem{CE} T. Cheon, P. Exner:
\emph{An approximation to $\delta'$ couplings on graphs.}
J. Phys. A {\bf 37} (2004) L329--L335.

\bibitem{chung} F. Chung: \emph{Spectral graph theory} (AMS, Providence, Rhode Island, 1997).

\bibitem{dn2} B. Dekoninck, S. Nicaise:
\emph{The eigenvalue problem for networks of beams.}
Lin. Alg. Appl. {\bf 314} (2000) 165--189. 

\bibitem{DM91} V. A. Derkach, M. M. Malamud:
\emph{Generalized resolvents and the boundary value problems for Hermitian operators with gaps.}
J. Funct. Anal. {\bf 95} (1991) 1--95.

\bibitem{DM} V. A. Derkach, M. M. Malamud:
\emph{The extension theory of Hermitian operators and the moment problem.}
J. Math. Sci. {\bf 73}:2 (1995) 141--242.

\bibitem{EDAG} P. Exner, G. Dell'Antonio, V. Geyler (Eds.):
Special Issue on ``Singular interactions in quantum mechanics: solvable models''.
J. Phys. A {\bf 38} (2005), no.~22.


\bibitem{AGA}
P. Exner, J. P. Keating, P. Kuchment, T. Sunada, A. Teplyaev (Eds.):
\emph{Analysis on Graphs and Its Applications} (Volume 77 of Proceedings of Symposia in Pure Mathematics,
AMS, 2008) 469--490.

\bibitem{exdual} P.~Exner: \emph{A duality between Schr{\"o}dinger operators on graphs and certain Jacobi matrices.}
Ann. Inst. Henri Poincar{\'e} Phys. Th{\'e}or. {\bf 66} (1997) 359--371.

\bibitem{GS} S.~Gnutzmann, U.~Smilansky: \emph{Quantum
    graphs\textup{:} Applications to quantum chaos and universal
    spectral statistics.} Adv. Phys.  {\bf 55} (2006) 527--625.

\bibitem{GG} V.~I.~Gorbachuk, M.~A.~Gorbachuk.
     \emph{Boundary value problems for operator differential equations}
(Kluwer Acad. Publ., Dordrecht etc., 1991).

\bibitem{MH} M. S. Harmer: \emph{A relation between the spectrum of the
Laplacean and the geometry of a compact graph.} Research Report no. 446, Department of Mathematics, University of Auckland (2000),
available at \url{http://www.math.auckland.ac.nz/Research/Reports/}.

\bibitem{KP08} F. Klopp, K. Pankrashkin:
\emph{Localization on quantum graphs with random vertex couplings.}
J. Stat. Phys. {\bf 131} (2008) 651--673.

\bibitem{KP09} F. Klopp, K. Pankrashkin: \emph{Localization on quantum graphs with random edge lengths.}
Lett. Math. Phys. {\bf 87} (2009) 99--114.

\bibitem{KuPo} P. Kuchment, O. Post:
\emph{On the spectra of carbon nano-structures.}
Commun. Math. Phys. {\bf 275} (2007) 805--826. 

\bibitem{Ku} P.~Kuchment (Ed.): Quantum graphs special section.
Waves Random Media {\bf 14} (2004) no.~1.

\bibitem{MM} S. M. Malamud, M. M. Malamud: \emph{Spectral theory of operator measures in Hilbert spaces.}
St. Peterburg Math. J. {\bf 15}:3 (2003) 1--53.

\bibitem{sn1} S. Nicaise: \emph{Some results on spectral theory over networks applied to
nerve impulse transmission.} In the book \emph{Polyn\^omes Orthogonaux et Applications}
(Lect. Notes Math., vol. 1171, Springer-Verlag, 1985) 532--541.

\bibitem{KP06} K. Pankrashkin: \emph{Spectra of Schr\"odinger operators on equilateral quantum graphs.}
Lett. Math. Phys. {\bf 77} (2006) 139--154. 

\bibitem{KP06b} K. Pankrashkin: \emph{Localization effects in a periodic quantum graph with magnetic field and spin-orbit interaction.}
J. Math. Phys. {\bf 47} (2006) 112105.

\bibitem{paul} V. Paulsen: \emph{Completely bounded maps and operator algebras}
(Volume 78 of the book series ``Cambridge studies in advanced mathematics'', Cambridge, 2003).

\bibitem{P08} O. Post: \emph{Equilateral quantum graphs and boundary triples.}
In the book \cite{AGA} 469--490.

\bibitem{P09} O. Post: \emph{First order approach and index theorems for discrete and metric graphs.}
Ann. Henri Poincar\'e {\bf 10} (2009) 823--866.



\end{thebibliography}
\end{document}